\documentclass{IEEEojcsys}

\usepackage{amsmath,amssymb,amsthm}
\usepackage{mathrsfs}
\usepackage{booktabs}
\usepackage{graphicx}
\usepackage{etoolbox}
\usepackage{cite}
\usepackage{listings}
\usepackage{tikz}
\usepackage{pgfplots}
\pgfplotsset{compat=1.16}
\pgfplotsset{
    masubuchi axis/.style={
        width=0.90\columnwidth,
        height=0.70\columnwidth,
        font=\footnotesize,
        ymin=5.7,
        ymax=7.7,
        ytick={5.8,6.2,6.6,7.0,7.4},
        ylabel={solver-returned objective $ \gamma $},
        grid=major,
        grid style={gray!22}
    }
}
\usetikzlibrary{calc,plotmarks,positioning}
\usepackage{xcolor}
\usepackage{url}
\usepackage{bm}
\usepackage[hidelinks]{hyperref}
\usepackage{cleveref}

\makeatletter
\patchcmd{\@maketitle}
  {{\receivedfont \ifx\@receiveddate\@empty\else Received\ \@receiveddate\fi\ifx\@revised\@empty\else;\ revised\ \@reviseddate\fi\ifx\@accepteddate\@empty\else;\ accepted\ \@accepteddate\fi\ifx\@publisheddate\@empty\else; Date of publication\space\@publisheddate\fi\ifx\@currentdate\@empty\else;\space date of current version\space\@currentdate\fi.\space \@editor\par}}
  {}
  {}
  {\PackageError{paper}{Unable to suppress the publisher metadata line}{The OJ-CSYS class title implementation may have changed.}}
\def\submissionpagefooters{%
  \def\@evenfoot{\hbox to \textwidth{{\rfxfont\thepage}\hfil}}%
  \def\@oddfoot{\hbox to \textwidth{\hfil{\rfxfont\thepage}}}%
}
\let\submissionvendorheadings\ps@headings
\def\ps@headings{%
  \submissionvendorheadings
  \submissionpagefooters
}
\let\submissionvendorplain\ps@plain
\def\ps@plain{%
  \submissionvendorplain
  \submissionpagefooters
}
\submissionpagefooters
\makeatother

\newtheorem{lemma}{Lemma}
\graphicspath{{pics/}{biographAndphoto/}}

\theoremstyle{plain}

\newtheorem{assumption}{Assumption}

\newtheorem{remark}{Remark}
\newtheorem{corollary}{Corollary}

\newtheorem{example}{Example}
\crefname{equation}{}{}
\crefname{algorithm}{Algorithm}{Algorithms}
\crefname{figure}{Fig.}{Figs.}
\crefname{proposition}{Proposition}{Propositions}
\crefname{lemma}{Lemma}{Lemmas}
\crefname{theorem}{Theorem}{Theorems}
\crefname{assumption}{Assumption}{Assumptions}
\crefname{appendix}{Appendix}{Appendices}
\crefname{section}{Section}{Sections}
\crefname{example}{Example}{Examples}

\DeclareMathOperator{\He}{He}
\DeclareMathOperator{\co}{co}
\newcommand{\R}{\mathbb{R}}
\newcommand{\Sset}{\mathbb{S}}

\crefname{equation}{}{}

\definecolor{codeblue}{RGB}{22,74,132}
\definecolor{codegreen}{RGB}{20,110,62}
\lstdefinestyle{matlab}{
  language=Matlab,
  basicstyle=\ttfamily\footnotesize,
  keywordstyle=\color{codeblue}\bfseries,
  commentstyle=\color{codegreen},
  columns=fullflexible,
  keepspaces=true,
  frame=single,
  breaklines=true,
  showstringspaces=false,
  deletekeywords={grid,gamma}
}

\hypersetup{
    pdftitle={GriD-LMIA: The Gridding-Based Differentiable Parameter-Dependent LMI Assembler},
    pdfauthor={Yicheng Xu and Faryar Jabbari}
}

\begin{document}

\sptitle{Tools Paper}

\title{GriD-LMIA: The Gridding-Based Differentiable Parameter-Dependent LMI Assembler}

\author{Yicheng Xu\affilmark{1}}

\author{Faryar Jabbari\affilmark{1}}

\affil{Department of Mechanical and Aerospace Engineering, University of California Irvine, Irvine, CA 92697 USA (e-mail: \href{mailto:yichex7@uci.edu}{yichex7@uci.edu}; \href{mailto:fjabbari@uci.edu}{fjabbari@uci.edu})}

\corresp{CORRESPONDING AUTHOR: Yicheng Xu (e-mail: \href{mailto:yichex7@uci.edu}{yichex7@uci.edu})}

\markboth{GriD-LMIA: THE GRIDDING-BASED DIFFERENTIABLE PARAMETER-DEPENDENT LMI ASSEMBLER}{Y. XU AND F. JABBARI}

\begin{abstract}
    Parameter dependent linear matrix inequalities require conditions to hold over a continuous domain. When the scheduling parameters vary with time, derivatives of parameter dependent decision variables may also enter the conditions. Since semidefinite programming solvers require finitely many constraints, we introduce GriD-LMIA, the Gridding-based Differentiable parameter dependent LMI Assembler. Here, differentiable describes the derivative-dependent LMI, while the assembled continuous piecewise-polynomial decisions are locally Lipschitz and are interpreted through their Clarke generalized derivatives. The package partitions a hyper-rectangular domain and represents known data and decisions on each cell with tensor Bernstein polynomials. Six finite certificates are available through YALMIP: the direct Bernstein and P\'olya coefficient tests together with dense and sparse Putinar and FullBox Gram constructions. Each assembled constraint retains its physical-cell, Bernstein-coefficient, and rate-vertex indexing, so users can inspect the finite model before invoking a semidefinite programming solver. The examples examine the balance among grid density, decision degree, certificate strength, and semidefinite-program size.
\end{abstract}

\begin{IEEEkeywords}
    GriD-LMIA, Gridding methods, Linear parameter-varying systems, Parameter Dependent Linear Matrix Inequalities, Bernstein polynomials, Semidefinite programming
\end{IEEEkeywords}

\maketitle

\section{Introduction}
\label{sec:introduction}

Linear Matrix Inequalities (LMIs) play an important role in modern control theory, including stability analysis and controller synthesis for linear time-invariant systems \cite{Boyd1994,Scherer2000}. When the system dynamics vary with time, however, verification under finitely many fixed LMIs is no longer sufficient. One common model for such dynamics is the Linear Parameter-Varying (LPV) framework, in which the system matrices depend on scheduling parameters \cite{Sename2025}. The corresponding Parameter Dependent (PD) LMI must hold at every point of the continuous scheduling domain. Moreover, time-varying scheduling parameters introduce derivatives of a PD Lyapunov matrix \cite{Wu1996}, which leads to a differentiable PD-LMI (DPD-LMI) \cite{Scherer2006,Geromel2023}. Here, we introduce GriD-LMIA, the Gridding-based Differentiable PD-LMI Assembler, a MATLAB toolbox for assembling finitely many sufficient conditions for DPD-LMIs.

Existing LPV methods obtain finite tests through polytopic, linear-fractional, or gridded representations \cite{Hoffmann2015,Sename2025}. Polytopic methods assume affine or polytopic parameter dependence, which is convenient for finite analysis, but may overbound the original model and introduce conservatism, as they frame nonlinear dependence in a polytope. Linear-Fractional Transformation (LFT) models instead, isolate the scheduling parameters in a feedback block \cite{Packard1994}. Multiplier methods such as Integral Quadratic Constraints (IQCs) can then provide finite sufficient conditions \cite{Koroglu2007,Veenman2016}. However, the required LFT representation and the selected multiplier class may introduce additional conservatism. Gridding methods, on the other hand, impose fewer structural requirements on the scheduling parameter \cite{Gahinet1996,Xu2024a}, but size of LMIs grows exponentially with the number of gridded directions. Recent advances in modeling systems \cite{Lofberg2004} and Semi-Definite Programming (SDP) solvers \cite{MOSEKApS2019} have made such finite gridded formulations practical for a wider range of problems.

These different system representations have led to software packages with different modeling assumptions. ROLMIP \cite{Agulhari2019} parses robust polynomial LMIs using polytopic representations, whereas R-RoMulOC \cite{Chamanbaz2015} supports randomized analysis over convex uncertainty sets. MATLAB's Robust Control Toolbox addresses uncertain LFT models. LPVTools and LPVcore \cite{Hjartarson2015,denBoef2021} provide broader LPV modeling, analysis, and synthesis capabilities, including gridded LPV system models, and LPVTools also supports native LFT/IQC analysis. Finite grid-point tests in gridded LPV analysis, however, provide only necessary conditions for the continuous problem and rely on denser grids for approximation. By contrast, GriD-LMIA is a gridding-based numerical solution package for continuous-domain DPD-LMIs. In the numerical section, LPVTools therefore serves only as a baseline native-IQC reference.

We consider DPD-LMIs on a hyper-rectangular parameter domain, following early linear-spline methods \cite{Masubuchi1998Spline,Hamada2005}. Users specify a grid in each scheduling direction, and the tensor product of these grids partitions the domain into cells. On each normalized cell, tensor Bernstein polynomials \cite{Farouki2012} represent the known data and continuous piecewise-polynomial decision variables. Shared decision variables and matrix coefficients on the same boundary enforce value continuity without extra interface constraints. By collecting all adjacent limiting derivatives at a cell boundary, the Clarke generalized-derivative result in \cite{Clarke1990} extends the DPD-LMI across the full domain. GriD-LMIA then translates the remaining cellwise polynomial inequalities into finitely many LMIs through six certificate paths: Direct Bernstein, P\'olya, dense and sparse Putinar, and dense and sparse FullBox \cite{Harris2023,Polya1928,Putinar1993}. These paths export finite constraints to YALMIP \cite{Lofberg2004}, and the two sparse paths replace dense Gram matrices with banded Gram matrices over Bernstein basis labels. Compared to general purpose Sum-Of-Squares (SOS) packages, e.g., SOSTOOLS \cite{Papachristodoulou2021}, who supports broader classes of polynomial positivity programs. The SOS assemblers in GriD-LMIA operate only on polynomial matrices represented in the cellwise tensor Bernstein basis. The numerical studies examine decision degree, grid density, certificate choice, and closely aligned coefficient formulations with ROLMIP. The source code, installation instructions, and documentation are available at \url{https://github.com/TheBigoranger/GriD-LMIA}.

The remainder of the paper is organized as follows. \Cref{sec:problem,sec:lipschitz} develop the modeling stage from the problem setting to the cellwise Bernstein representation. \Cref{sec:certificates} presents the finite certificate stage, while \cref{sec:simulation,sec:conclusion} report the numerical evidence and summarize the appropriate use and limitations of GriD-LMIA.

\section{Problem Setting}
\label{sec:problem}
\subsection{Notation}
For a set $ A $, $ \mathscr{P}(A):=\{J\mid J\subseteq A\} $ denotes its power set. The binomial coefficient is $ \binom{m}{i}:=\frac{m!}{i!(m-i)!} $ for $ 0\leq i\leq m $, and zero otherwise. For block entries $ X[r,k] $, $ r\in R $ index the block rows and $ k\in K $ index the block columns. We write $ \left[X[r,k]\right]_{\substack{r\in R\\k\in K}} $ for the block matrix and $ \left[X[r]\right]_{r\in R} $ for a block column. The notation $ \operatorname{diag}\{X_q\} $ denotes a block-diagonal matrix. The transpose of $ X $ is $ X^\top $, $ \He(X)=X+X^\top $, $ \otimes $ denotes the Kronecker product, and $ I $ denotes an identity matrix. The set of real symmetric $ n\times n $ matrices is denoted by $ \Sset^n $. The Euclidean vector norm is $ \lVert\cdot\rVert_2 $. For a vector-valued signal $ f $, define $ \lVert f\rVert_{\mathcal L_2}:= \left(
    \int_0^\infty f(t)^\top f(t)\,dt
    \right)^{1/2} $, and let $ \mathcal L_2 $ denote the space of all signals with a finite $ \mathcal L_2 $ norm. For a symmetric matrix $ X $, the relations $ X\prec(\preceq)0 $, mean that $ X $ is negative (semi)definite respectively. Operator $ \co(\cdot) $ denotes the convex hull.

\subsection[LPV L2-gain analysis]{Motivation: LPV $ \mathcal L_2 $-gain analysis}
Consider an LPV system
\begin{gather}
    \label{eq: lpv-system}
    \Sigma:\left\{\begin{aligned}
        \dot x(t) & =A(\bm{\rho}(t))x(t) +B(\bm{\rho}(t))w(t), \\
        z (t)     & =C(\bm{\rho}(t))x(t) +D(\bm{\rho}(t))w(t).
    \end{aligned}\right.
\end{gather}
where $ x(t)\in\R^{n_x} $ is the state, $ w(t)\in\R^{n_w} $ is the exogenous input, and $ z(t)\in\R^{n_z} $ is the performance output. The matrices $ A $, $ B $, $ C $, and $ D $ have compatible dimensions, and $ \bm{\rho}(t)=\begin{bmatrix} \rho_1(t) & \ldots & \rho_\ell(t)
    \end{bmatrix}\in
    \mathcal P\subset\R^{\ell} $ is a measured scheduling parameter, with rate $ \dot{\bm\rho}(t)\in\mathcal R\subset\R^{\ell} $.
For zero initial conditions, the induced-$\mathcal L_2 $ gain of \cref{eq: lpv-system} is
\begin{gather}
    \label{eq: induced-l2-gain}
    \lVert
    \Sigma
    \rVert_{\mathcal L_2}=\sup_{w\in \mathcal{L}_2\setminus\{0\}}
    \frac{\lVert z\rVert_{\mathcal{L}_2}}{\lVert w\rVert_{\mathcal{L}_2}}.
\end{gather}
The following classical parameter-dependent condition motivates the derivative-dependent inequalities considered in this paper.
\begin{lemma}[Bound on $\mathcal{L}_2$ gain \cite{Scherer1996}]
    \label{lemma: L2-gain}
    Suppose there exist a differentiable matrix function $ P:\mathcal P\to\Sset^{n_x} $ and a scalar $ \gamma>0 $ such that $ P(\bm\rho)\succ0 $ and, for every $ (\bm\rho,\bm v)\in\mathcal P\times\mathcal R $,
    \begin{equation}
        L(\bm{\rho},\bm v) =\begin{bmatrix}
            \displaystyle\sum_{s=1}^\ell \frac{\partial P}{\partial\rho_s}v_s+\He(PA) & PB        & C^\top    \\
            B^\top P                                                                  & -\gamma I & D^\top    \\
            C                                                                         & D         & -\gamma I
        \end{bmatrix} \prec0,
        \label{eq: bounded-real}
    \end{equation}
    then $ \lVert \Sigma \rVert_{\mathcal L_2}\leq\gamma $.
\end{lemma}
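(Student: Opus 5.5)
The plan is the standard storage-function argument with $V(x,t)=x^\top P(\bm\rho(t))x$, combined with a Schur complement to absorb the performance output. The steps follow in order.

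\emph{Step 1: Schur complement.} Since $-\gamma I\prec0$, the condition $L(\bm\rho,\bm v)\prec0$ is equivalent to
\begin{equation*}
\begin{bmatrix}\sum_s \frac{\partial P}{\partial\rho_s}v_s+\He(PA) & PB\\ B^\top P & -\gamma I\end{bmatrix}
+\frac{1}{\gamma}\begin{bmatrix}C^\top\\ D^\top\end{bmatrix}\begin{bmatrix}C & D\end{bmatrix}\prec0 .
\end{equation*}
Here the Schur complement is taken with respect to the last block row and column.

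\emph{Step 2: dissipation inequality.} Take an admissible trajectory $\bm\rho(t)\in\mathcal P$ with $\dot{\bm\rho}(t)\in\mathcal R$. I will assume $\bm\rho$ is absolutely continuous, which is the natural regularity for rate-bounded parameters. Fix $w\in\mathcal L_2$ and $x(0)=0$. By the chain rule, $\tfrac{d}{dt}P(\bm\rho(t))=\sum_s \frac{\partial P}{\partial\rho_s}\dot\rho_s$ almost everywhere. Multiply the matrix inequality of Step 1, evaluated at $(\bm\rho(t),\dot{\bm\rho}(t))$, on the left and right by $[x^\top\ w^\top]$ and its transpose. Then dividing by $\gamma$, for a.e.\ $t$,
\begin{equation*}
\dot V(t)+\tfrac1\gamma z^\top z-\gamma w^\top w\le 0 .
\end{equation*}

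\emph{Step 3: integrate.} Integrate from $0$ to $T$. Use $V(0)=0$ and $V(T)\ge0$, which holds because $P\succ0$. This gives
\begin{equation*}
\int_0^T z^\top z\,dt\le\gamma^2\int_0^T w^\top w\,dt .
\end{equation*}
Letting $T\to\infty$ yields $\lVert z\rVert_{\mathcal L_2}\le\gamma\lVert w\rVert_{\mathcal L_2}$. Taking the supremum in the definition of the induced gain gives $\lVert\Sigma\rVert_{\mathcal L_2}\le\gamma$. If one wants $z\in\mathcal L_2$ explicitly, the inequality itself shows the truncated norms are uniformly bounded, so $z\in\mathcal L_2$.

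\emph{Main obstacle.} The delicate step is justifying Step 2 along trajectories. This requires two things: well-posedness and absolute continuity of $x$, so that $V$ is absolutely continuous and the fundamental theorem of calculus applies, and the chain rule for $P(\bm\rho(t))$. I would assume the system matrices are continuous on $\mathcal P$, which I take to be compact, so that solutions of the linear time-varying ODE exist and are absolutely continuous. Differentiability of $P$ then gives the chain rule a.e.

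Strict negativity only yields an a.e.\ pointwise inequality, and that suffices for integration. If $\mathcal R$ is a polytope and $\dot{\bm\rho}$ lies in it, the pointwise condition at $\bm v=\dot{\bm\rho}(t)$ is exactly what the hypothesis provides, so no convexity argument is needed here.
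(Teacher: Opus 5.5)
Your proposal is correct and matches the argument the paper relies on. The lemma itself is only cited from Scherer, but the paper's proof of Corollary~\ref{cor:locally-lipschitz-bounded-real} uses the same steps: the Schur complement, the storage function $V=x^\top P(\bm\rho)x$, the almost-everywhere inequality $\dot V+\gamma^{-1}z^\top z-\gamma w^\top w\le 0$, and integration from the zero initial condition.

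One point to tighten: absolute continuity of $V$ needs $P\circ\bm\rho$ to be absolutely continuous, not only $x$. The paper obtains this from Lipschitz continuity of $P$ on the compact domain, which is immediate if ``differentiable'' is read as $C^1$. Also, no division by $\gamma$ occurs in Step~2, since the quadratic form already carries the factor $\gamma^{-1}$; the multiplication by $\gamma$ happens in Step~3.
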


Enforcing \cref{eq: bounded-real} over the continuous domain $ \mathcal P\times\mathcal R $ is a semi-infinite programming problem, whereas a SDP solver only accepts finitely many matrix constraints. GriD-LMIA therefore constructs sufficient conditions that can be exported to a numerical SDP model. The following assumption is adopted throughout the paper.
\begin{assumption}[hyper-rectangular parameter and rate]
    \label{assumption: hyper-rectangular parameter and rate}
    The domain $ \mathcal{P} $ of the scheduling parameter $ \bm{\rho}(t) $ and the domain $ \mathcal{R} $ of its rate $ \dot{\bm{\rho}}(t) $ are both hyper-rectangles:
    \begin{gather*}
        \mathcal{P}=\prod_{s=1}^{\ell}[\underline{\rho}_s,\overline{\rho}_s],\quad
        \mathcal{R}=\prod_{s=1}^{\ell}[\underline{\nu}_s,\overline{\nu}_s].
    \end{gather*}
\end{assumption}

Under \cref{assumption: hyper-rectangular parameter and rate}, the dependence of \cref{eq: bounded-real} on the rate $ \dot{\bm\rho} $ is affine. By convexity, it is equivalent to check \cref{eq: bounded-real} on $ (\bm{\rho},\dot{\bm{\rho}})\in \mathcal{P}\times\mathcal{V}_{\mathcal{R}} $, where $ \mathcal{V}_{\mathcal{R}} $ is the set of vertices of $ \mathcal{R} $:
\begin{gather*}
    \mathcal{V}_{\mathcal{R}}=\prod_{s=1}^{\ell}\{\underline{\nu}_s,\overline{\nu}_s\}.
\end{gather*}
To represent the strict inequality in \cref{eq: bounded-real} by a closed semidefinite constraint, let a residual margin be $ \epsilon_L>0 $. The rate dependence is then handled by the finite enumeration
\begin{gather}
    \operatorname{diag}\left\{L(\bm{\rho},\dot{\bm{\rho}})\mid
    \dot{\bm{\rho}}\in\mathcal{V}_{\mathcal{R}}\right\}\preceq-\epsilon_L I,
    \quad \forall \bm{\rho}\in\mathcal{P}.
    \label{eq: enumerated-lmi}
\end{gather}

However, the dependence on $ \bm{\rho} $ does not allow a simple convex combination to remove the continuum, which is the central modeling problem addressed by GriD-LMIA.

\subsection{Differentiable Parameter Dependent LMIs}

Let $ \bm{y}:\mathcal P\to\R^N $ be a differentiable parameter dependent decision vector $ \bm y(\bm\rho)=\begin{bmatrix}
        y_1(\bm\rho) & \ldots & y_N(\bm\rho)
    \end{bmatrix}^\top $. A general DPD-LMI considered in GriD-LMIA is
\begin{multline}
    \mathcal F(\bm{\rho})
    =  \sum_{i=1}^{N}\sum_{s=1}^{\ell}T_{i,s}(\bm{\rho})
    \frac{\partial y_i}{\partial\rho_s}(\bm\rho) \\
    +F_0(\bm{\rho})+\sum_{i=1}^{N}F_i(\bm{\rho})y_i(\bm{\rho})\preceq0,
    \quad \forall \bm{\rho}\in\mathcal{P}.
    \label{eq: dpdlmi-template}
\end{multline}
Here $ T_{i,s}(\bm\rho) $, $ F_0(\bm\rho) $, and $ F_i(\bm\rho) $ are known symmetric parameter dependent matrices of compatible dimensions. The \cref{eq: bounded-real} fits in \cref{eq: dpdlmi-template} by applying the margin shift in \cref{eq: enumerated-lmi}, absorbing $ \epsilon_L I $ into $ F_0 $, and vectorizing $ P(\bm\rho) $ together with the constant $ \gamma(\bm\rho)\equiv\gamma $ into $ \bm y(\bm\rho) $. \Cref{lem: clarke} extends this differentiable template to the locally Lipschitz decisions produced by gridding.

If $ T_{i,s}(\bm\rho)\equiv0 $ for every $ i,s $, the derivative terms disappear and \cref{eq: dpdlmi-template} reduces to a PD-LMI
\begin{equation}
    F_0(\bm\rho)+\sum_{i=1}^{N}F_i(\bm\rho)y_i(\bm\rho)\preceq0,
    \quad \forall \bm\rho\in\mathcal P.
    \label{eq:pdlmi-template}
\end{equation}

\section{Lipschitz Decision Variables and Bernstein Parameterization}
\label{sec:lipschitz}

The major difficulty in \cref{eq: dpdlmi-template} is that the condition must hold for every $ \bm\rho\in\mathcal P $. GriD-LMIA begins by partitioning $ \mathcal P $ into finitely many physical cells and retains the continuous dependence inside each cell through local Bernstein polynomials.

\subsection{Piecewise polynomial gridding}
\label{subsec: Piecewise polynomial gridding}
Since \cref{assumption: hyper-rectangular parameter and rate} holds, we can partition the parameter domain $ \mathcal P $ into finitely many cells, with the axis grid:
\begin{equation}
    \label{eq: axis-grid}
    \mathcal G_s=\{\underline\rho_s=\rho_s^{(1)}<\cdots<
    \rho_s^{(k_s)}=\overline\rho_s\},
    \quad s=1,\ldots,\ell.
\end{equation}
where $ k_s\geq2 $ is the number of grid nodes along scheduling direction $ s $. The tensor grid $ \mathcal G=\mathcal G_1\times\cdots\times\mathcal G_\ell $ contains $ \prod_{s=1}^\ell k_s $ physical nodes and partitions $ \mathcal P $ into $ \prod_{s=1}^\ell(k_s-1) $ physical cells. A hyper-rectangular cell is indexed by a vector $ \mathbf c=(c_1,\ldots,c_\ell) $, in which each entry $ c_s\in\{1,\ldots,k_s-1\} $:
\begin{equation}
    \label{eq: hyper-rectangle-cell}
    \mathcal H_{\mathbf c}=\prod_{s=1}^{\ell}
    [\rho_s^{(c_s)},\rho_s^{(c_s+1)}].
\end{equation}
A two-dimensional example is shown in \cref{fig: two-dimensional-grid}. It has $ 4\times3=12 $ physical nodes and $ 3\times2=6 $ physical cells. The cell $ \mathcal H_{(2,1)}= \left[\rho_1^{(2)},\rho_1^{(3)}\right] \times \left[\rho_2^{(1)},\rho_2^{(2)}\right] $ is highlighted in blue.

\begin{figure}[htbp]
    \centering
    \begin{tikzpicture}[x=1cm,y=1cm]
        \def\cellWidth{1.40}
        \def\cellHeight{1.1}
        \def\axisLead{0.15cm}
        \def\axisExtension{0.45cm}
        \def\gridLineWidth{0.55pt}
        \def\axisLineWidth{0.80pt}
        \def\labelGap{2pt}
        \def\nodeLabelGap{1pt}
        \def\gridCalloutGap{4pt}
        \def\gridPointerWidth{0.45pt}

        \tikzset{
            grid line/.style={draw=gray!70,line width=\gridLineWidth},
            grid point/.style={circle,fill,inner sep=1.7pt},
            cell label/.style={inner sep=0pt},
            axis label/.style={inner sep=0pt},
            axis grid label/.style={inner sep=0pt}
        }

        \coordinate (n11) at (0,0);
        \coordinate (n21) at (\cellWidth,0);
        \coordinate (n31) at ({2*\cellWidth},0);
        \coordinate (n41) at ({3*\cellWidth},0);
        \coordinate (n12) at (0,\cellHeight);
        \coordinate (n22) at (\cellWidth,\cellHeight);
        \coordinate (n32) at ({2*\cellWidth},\cellHeight);
        \coordinate (n42) at ({3*\cellWidth},\cellHeight);
        \coordinate (n13) at (0,{2*\cellHeight});
        \coordinate (n23) at (\cellWidth,{2*\cellHeight});
        \coordinate (n33) at ({2*\cellWidth},{2*\cellHeight});
        \coordinate (n43) at ({3*\cellWidth},{2*\cellHeight});

        \path (n11)--(n22) coordinate[pos=.5] (h11);
        \path (n21)--(n32) coordinate[pos=.5] (h21);
        \path (n31)--(n42) coordinate[pos=.5] (h31);
        \path (n12)--(n23) coordinate[pos=.5] (h12);
        \path (n22)--(n33) coordinate[pos=.5] (h22);
        \path (n32)--(n43) coordinate[pos=.5] (h32);

        \path let \p1=(n11), \p2=(n41), \p3=(n13) in
        coordinate (xAxisStart) at (\x1-\axisLead,\y1)
        coordinate (xAxisEnd) at (\x2+\axisExtension,\y2)
        coordinate (yAxisStart) at (\x1,\y1-\axisLead)
        coordinate (yAxisEnd) at (\x3,\y3+\axisExtension);

        \path[fill=blue!15] (n21) rectangle (n32);
        \draw[grid line] (n11)--(n41)--(n43)--(n13)--cycle;
        \foreach \from/\to in {n21/n23,n31/n33,n12/n42}{
                \draw[grid line] (\from)--(\to);
            }
        \foreach \point in {n11,n21,n31,n41,n12,n22,n32,n42,n13,n23,n33,n43}{
                \node[grid point] at (\point) {};
            }

        \draw[->,line width=\axisLineWidth] (xAxisStart)--(xAxisEnd);
        \draw[->,line width=\axisLineWidth] (yAxisStart)--(yAxisEnd);
        \node[axis label,right=\labelGap of xAxisEnd] {$ \rho_1 $};
        \node[axis label,above right=\labelGap of yAxisEnd] {$ \rho_2 $};

        \node[below=\nodeLabelGap of n11] (rho11) {$ \rho_1^{(1)} $};
        \node[below=\nodeLabelGap of n21] (rho12) {$ \rho_1^{(2)} $};
        \node[below=\nodeLabelGap of n31] (rho13) {$ \rho_1^{(3)} $};
        \node[below=\nodeLabelGap of n41] (rho14) {$ \rho_1^{(4)} $};
        \node[left=\nodeLabelGap of n11] (rho21) {$ \rho_2^{(1)} $};
        \node[left=\nodeLabelGap of n12] (rho22) {$ \rho_2^{(2)} $};
        \node[left=\nodeLabelGap of n13] (rho23) {$ \rho_2^{(3)} $};

        \node[axis grid label,right=\gridCalloutGap of rho14] (xGridLabel)
        {$ \rightarrow\mathcal G_1 $};
        \node[axis grid label,above=\gridCalloutGap of rho23, align=center] (yGridLabel)
        {$ \mathcal G_2 $\\
            $ \uparrow $};

        \node[cell label] at (h11) {$ \mathcal H_{(1,1)} $};
        \node[cell label,blue!70!black] at (h21) {$ \mathcal H_{(2,1)} $};
        \node[cell label] at (h31) {$ \mathcal H_{(3,1)} $};
        \node[cell label] at (h12) {$ \mathcal H_{(1,2)} $};
        \node[cell label] at (h22) {$ \mathcal H_{(2,2)} $};
        \node[cell label] at (h32) {$ \mathcal H_{(3,2)} $};
    \end{tikzpicture}
    \caption{Two-dimensional tensor gridding example on $ \mathcal{P}\subset \mathbb{R}^2 $.}
    \label{fig: two-dimensional-grid}
\end{figure}

Any $ \bar{\bm\rho} $ that belongs to more than one cell is called an \textit{interface point}. For example, $ (\rho_1^{(2)},\rho_2^{(1)}) $ is shared by $ \mathcal H_{(1,1)} $ and $ \mathcal H_{(2,1)} $. We introduce the following assumption to ensure that the known data and decision variables are compatible with the gridding.

\begin{assumption}[Continuous piecewise polynomial data and decisions]
    \label{assumption: cell-polynomial}
    The known PD data $ T_{i,s}(\bm{\rho}),F_{0}(\bm{\rho}),F_{i}(\bm{\rho}) $ and the decision variables $ \bm{y}(\bm{\rho}) $ have the following properties:
    \begin{enumerate}
        \item They are polynomial with known degree on each hyper-rectangular cell $ \mathcal{H}_{\mathbf{c}}, \forall \mathbf{c}\in \prod_{s=1}^\ell \{1,\ldots,k_s-1\} $.
        \item Globally on $\mathcal{P}$, the known matrices are continuous, and the decisions $ \bm{y}(\bm{\rho}) $ are Lipschitz continuous.
    \end{enumerate}
\end{assumption}

The second statement equates polynomial values on a shared boundary of the cells, but does not require adjacent gradients to agree. Thus, the functions are continuous and piecewise $ C^1 $, but need not be globally $ C^1 $. Let $ f:\mathcal P\to\R^q $ be continuous on the finite cell partition, with each cell restriction extending to a $ C^1 $ map on a neighborhood of its closure. Let $ \mathcal A(\bar{\bm\rho}) $ collect the adjacent cell of an interface point $ \bar{\bm\rho} $. Its relative Clarke generalized Jacobian is
\begin{equation}
    \partial_{\mathcal P}^{\circ}f(\bar{\bm\rho})
    :=\co\left\{
    \lim_{\substack{
        \bm\rho\to\bar{\bm\rho}\\
        \bm\rho\in\operatorname{int}_{\mathcal P}(\mathcal H_{\mathbf c})
    }}
    Jf(\bm\rho)
    \;\middle|\;
    \mathbf c\in\mathcal A(\bar{\bm\rho})
    \right\},
    \label{eq:relative-clarke}
\end{equation}
where 
$ Jf(\bm\rho):=[\partial f_a/\partial\rho_s]_{\substack{
    a=1,\ldots,q\\
    \,s=1,\ldots,\ell
    }}\in\R^{q\times\ell} $ 
denotes the ordinary Jacobian of $ f $ at $ \bm\rho $ and $ \operatorname{int}_{\mathcal P} $ is relative to $ \mathcal P $. This definition gives the usual Clarke generalized Jacobian in the interior and the one-sided object on an exterior face \cite{Clarke1990}. 

\begin{lemma}[Interface coverage]
    \label{lem: clarke}
    Let $ \bm y $ be a decision vector satisfying \cref{assumption: cell-polynomial}. At an interface point $ \bar{\bm\rho} $, define the limiting Jacobian from $ \mathcal H_{\mathbf c} $ by
    \begin{equation}
        G_{\mathbf c}:=
        \left[
            \lim_{\substack{\bm\rho\to\bar{\bm\rho}\\
                    \bm\rho\in\operatorname{int}(\mathcal H_{\mathbf c})}}
            \frac{\partial y_i}{\partial\rho_s}(\bm\rho)
            \right]_{\substack{i=1,\ldots,N\\s=1,\ldots,\ell}}
        \in\R^{N\times\ell}.
        \label{eq:interface-jacobian}
    \end{equation}
    Write $ \partial_{\mathcal P}^{\circ}\bm y(\bar{\bm\rho}) $ for the relative Clarke generalized Jacobian in \eqref{eq:relative-clarke}. It is contained in the convex hull of the adjacent limiting Jacobians from cells inside $ \mathcal P $:
    \begin{equation}
        \partial_{\mathcal P}^{\circ} \bm y(\bar{\bm\rho})\subseteq
        \co\{G_{\mathbf c}:\mathbf c\in
        \mathcal A(\bar{\bm\rho})\}.
        \label{eq: clarke-hull}
    \end{equation}
    Suppose for all interface points $ \bar{\bm{\rho}} $:
    \begin{equation}
        \sum_{i=1}^{N}\sum_{s=1}^{\ell}
        T_{i,s}(\bar{\bm\rho})(G_{\mathbf c})_{i,s}
        +F_0(\bar{\bm\rho})
        +\sum_{i=1}^{N}F_i(\bar{\bm\rho})
        y_i(\bar{\bm\rho})
        \preceq0
        \label{eq: interface-cells}
    \end{equation}
    holds for every $ \mathbf c\in\mathcal A(\bar{\bm\rho}) $.  Then the same instance of \cref{eq: dpdlmi-template} satisfies
    \begin{equation}
        \sum_{i=1}^{N}\sum_{s=1}^{\ell}
        T_{i,s}(\bar{\bm\rho})G_{i,s}
        +F_0(\bar{\bm\rho})
        +\sum_{i=1}^{N}F_i(\bar{\bm\rho})
        y_i(\bar{\bm\rho})
        \preceq0
    \end{equation}
    for every $ G\in\partial_{\mathcal P}^{\circ}\bm y(\bar{\bm\rho}) $.
\end{lemma}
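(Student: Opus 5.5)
The plan has two parts: first the inclusion \eqref{eq: clarke-hull}, then the matrix inequality for every $G$ in the relative Clarke Jacobian. The second part follows from the first by convexity of the negative semidefinite cone and affinity of the left-hand side in $G$.

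\textbf{Inclusion \eqref{eq: clarke-hull}.} Fix an interface point $\bar{\bm\rho}$ and an adjacent cell $\mathbf c\in\mathcal A(\bar{\bm\rho})$. By \cref{assumption: cell-polynomial}, the restriction of each $y_i$ to $\mathcal H_{\mathbf c}$ agrees with a polynomial $p_{i,\mathbf c}$ defined on all of $\R^\ell$. Hence on $\operatorname{int}(\mathcal H_{\mathbf c})$ the ordinary Jacobian $J\bm y$ equals $Jp_{\mathbf c}$, which is continuous up to the boundary. So the limit in \eqref{eq:interface-jacobian} exists and equals $Jp_{\mathbf c}(\bar{\bm\rho})$.

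Next, $\mathcal H_{\mathbf c}$ is a nondegenerate box, so its relative interior in $\mathcal P$ and its topological interior have the same closure, and the limit over either set is the same number. The one-sided limits appearing in \eqref{eq:relative-clarke} therefore coincide with the $G_{\mathbf c}$. The generating sets of the two convex hulls are then identical, giving \eqref{eq: clarke-hull}, in fact with equality.

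The only care needed is the bookkeeping between $\operatorname{int}_{\mathcal P}$ and $\operatorname{int}$ for cells touching $\partial\mathcal P$. I will handle it by noting that both sets are dense in $\mathcal H_{\mathbf c}$ and that $J\bm y$ extends continuously to the closed cell.

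\textbf{Matrix inequality.} Take $G\in\partial^\circ_{\mathcal P}\bm y(\bar{\bm\rho})$. By \eqref{eq: clarke-hull}, $G=\sum_{\mathbf c}\lambda_{\mathbf c}G_{\mathbf c}$ with $\lambda_{\mathbf c}\ge0$ and $\sum_{\mathbf c}\lambda_{\mathbf c}=1$; the sum is finite because $\mathcal A(\bar{\bm\rho})$ is finite. Define
\[
\Phi(G):=\sum_{i,s}T_{i,s}(\bar{\bm\rho})G_{i,s}+F_0(\bar{\bm\rho})+\sum_i F_i(\bar{\bm\rho})y_i(\bar{\bm\rho}).
\]
This map is affine in $G$, and the values $F_0(\bar{\bm\rho})$, $F_i(\bar{\bm\rho})$, $y_i(\bar{\bm\rho})$ are single-valued because the data and decisions are continuous across cells. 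Since the weights sum to one,
\[
\Phi(G)=\sum_{\mathbf c}\lambda_{\mathbf c}\Phi(G_{\mathbf c}).
\]
Each $\Phi(G_{\mathbf c})\preceq0$ by \eqref{eq: interface-cells}, and a nonnegative combination of negative semidefinite matrices is negative semidefinite, so $\Phi(G)\preceq0$.

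\textbf{Main obstacle.} The step I expect to need the most care is the existence and identification of the one-sided limits, that is, justifying that the polynomial restriction extends $C^1$ across the closed cell. Once that is settled, everything else is convexity.
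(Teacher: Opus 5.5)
Your proposal is correct. Its second half is exactly the paper's argument: you write $G$ as a finite convex combination of the $G_{\mathbf c}$, use affinity in $G$ together with the single-valued interface values of $T_{i,s}$, $F_0$, $F_i$, $y_i$, and invoke convexity of the negative semidefinite cone.

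The first half takes a different and more elementary route. The paper first shows that $\bm y$ is Lipschitz relative to $\mathcal P$. It then argues that any sequence of differentiability points approaching $\bar{\bm\rho}$ has a subsequence confined to one adjacent cell, so every limiting Jacobian is some $G_{\mathbf c}$, and it appeals to Clarke's limiting-gradient characterization. That reasoning targets the classical Clarke generalized Jacobian, built from limits of Jacobians along arbitrary sequences of differentiability points. It is what backs the paper's remark that \eqref{eq:relative-clarke} reproduces the usual Clarke object in the interior.

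You instead work directly from the definition \eqref{eq:relative-clarke}. Because each cell piece is a polynomial, the limits over $\operatorname{int}_{\mathcal P}(\mathcal H_{\mathbf c})$ and over $\operatorname{int}(\mathcal H_{\mathbf c})$ both equal $Jp_{\mathbf c}(\bar{\bm\rho})$, so the two hulls have identical generators. This needs neither Lipschitz continuity nor Clarke's theorem, and it gives equality in \eqref{eq: clarke-hull}, not just inclusion. What it does not do is relate the paper's relative object to the standard Clarke Jacobian; that link is the extra content of the paper's subsequence argument.

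Your flagged ``main obstacle'' is immediate, since polynomial cell restrictions extend smoothly to all of $\mathbb{R}^\ell$.
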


\begin{IEEEproof}
    By \cref{assumption: cell-polynomial}, $ \bm y $ is continuous and has $ C^1 $ cell extensions with bounded gradients on compact closures. Hence, $ \bm y $ is Lipschitz relative to $ \mathcal P $. Because the partition is finite, any sequence of differentiability points approaching an interface has a subsequence associated with one adjacent cell, with limiting Jacobian $ G_{\mathbf c} $. Taking the convex hull of all such limits gives \cref{eq: clarke-hull} \cite{Clarke1990}.

    Write $ G=\sum_{\mathbf c}\lambda_{\mathbf c}G_{\mathbf c} $ over adjacent cells, where $ \lambda_{\mathbf c}\geq0 $ and $ \sum_{\mathbf c}\lambda_{\mathbf c}=1 $. Affinity in $ G $ and common interface values express the DPD-LMI as the same convex combination of the cell-side expressions in \cref{eq: interface-cells}. Since the negative-semidefinite cone is convex, the combination remains negative semidefinite and proves the result.
\end{IEEEproof}

\begin{corollary}[Locally Lipschitz $\mathcal{L}_2$ gain]
    \label{cor:locally-lipschitz-bounded-real}
    Let $ \bm\rho $ be absolutely continuous with $ \bm\rho(t)\in\mathcal P $ and $ \dot{\bm\rho}(t)\in\mathcal R $ almost everywhere. Suppose $ P:\mathcal P\to\Sset^{n_x} $ is continuous, each cell restriction extends to a $ C^1 $ map on a neighborhood of its closure, and $ P(\bm\rho)\succ0 $. In every closed cell, replace the partial derivatives in \cref{eq: bounded-real} by the derivatives of that cell restriction. If, for some $ \epsilon_L>0 $, the resulting matrix satisfies $ L(\bm\rho,\bm v)\preceq-\epsilon_L I $ for every $ \bm\rho $ in the cell and every $ \bm v\in\mathcal V_{\mathcal R} $, then $ \lVert\Sigma\rVert_{\mathcal L_2}\leq\gamma $.
\end{corollary}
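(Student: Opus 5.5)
The plan is a standard dissipation argument in which $V(t)=x(t)^\top P(\bm\rho(t))x(t)$ is used as a storage function. Because $P$ is only piecewise $C^1$, we cannot use the classical chain rule. Instead, the ordinary derivative of $V$ is replaced by an almost-everywhere derivative, and this derivative is controlled through \cref{lem: clarke}.

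\emph{Step 1: regularity.} Since $P$ is continuous and each cell restriction is $C^1$ on a neighborhood of a compact closed cell, $P$ is Lipschitz on $\mathcal P$, exactly as in the first part of the proof of \cref{lem: clarke}. Take $w\in\mathcal L_2$ and zero initial state. Then $x$ is absolutely continuous, because $A,B$ are continuous on the compact set $\mathcal P$ and hence bounded. The map $t\mapsto\bm\rho(t)$ is absolutely continuous. A Lipschitz function composed with an absolutely continuous curve is absolutely continuous, so $V$ is absolutely continuous on every finite interval $[0,T]$. Hence $V(T)=\int_0^T\dot V\,dt$.

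\emph{Step 2: the a.e.\ derivative.} This is the main obstacle. We must show that, for almost every $t$,
\begin{equation*}
\dot V(t)=x^\top\Big(\textstyle\sum_s G_s\dot\rho_s\Big)x+\He(x^\top P\dot x)
\end{equation*}
for some $G$ in the relative Clarke Jacobian $\partial^\circ_{\mathcal P}P(\bm\rho(t))$. Here $G_s$ denotes the matrix components of $G$.

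First, at a.e.\ $t$ where $\bm\rho(t)$ lies in the interior of some cell, the classical chain rule applies, with $G_s=\partial P/\partial\rho_s$ of that cell restriction.

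Second, for times where $\bm\rho(t)$ lies on interfaces, we use the nonsmooth chain rule for Lipschitz functions along absolutely continuous curves (Clarke, \cite{Clarke1990}). For a.e.\ $t$, the derivative $\frac{d}{dt}P(\bm\rho(t))$ equals $G\dot{\bm\rho}(t)$ for some $G\in\partial^\circ_{\mathcal P}P$. If that theorem is not available in the relative form needed, the fallback is a direct argument on the interface set. At a.e.\ $t$ in $\{t:\bm\rho(t)\in\text{face}\}$, the velocity $\dot{\bm\rho}$ is tangent to the face. Since all adjacent cell restrictions agree on the face, their directional derivatives along tangent directions coincide. Any one-sided limit $G_{\mathbf c}$ therefore gives the derivative.

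\emph{Step 3: pointwise dissipation.} By hypothesis, the inequality $L(\bm\rho,\bm v)\preceq-\epsilon_L I$ holds on each closed cell for every rate vertex. It is affine in $\bm v$, so it extends to all $\bm v\in\mathcal R$. In particular it holds with $\bm v=\dot{\bm\rho}(t)$, using the cell restriction's derivative at the interface, that is, $G_{\mathbf c}$. Applying \cref{lem: clarke}, with $P$ vectorized into $\bm y$, extends the inequality to every $G$ in the Clarke Jacobian, which covers the derivative selected in Step 2. We multiply $L$ by $[x;w;z]$-type vectors after a Schur complement in the last block: with $-\gamma I\prec0$, the inequality is equivalent to the usual $2\times2$ inequality with $\gamma^{-1}C^\top C$ terms. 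This gives
\begin{equation*}
\dot V+\gamma^{-1}z^\top z-\gamma w^\top w\le0
\end{equation*}
almost everywhere.

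\emph{Step 4: integration.} Integrating over $[0,T]$ with $V(0)=0$ and $V(T)\ge0$ gives $\int_0^T z^\top z\le\gamma^2\int_0^T w^\top w$. Letting $T\to\infty$ yields $\lVert\Sigma\rVert_{\mathcal L_2}\le\gamma$. The strict margin $\epsilon_L$ is not even needed for the bound itself; it only ensures that the closed inequality is well posed.
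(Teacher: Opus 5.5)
Your proof is correct and follows the paper's argument. It uses absolute continuity of $V$ from the Lipschitz property of $P$, the classical chain rule inside cells, the level-set tangency argument (your fallback, which is exactly what the paper uses) together with \cref{lem: clarke} at interfaces, extension from rate vertices by affinity, the Schur complement, and integration from the zero initial state; your primary appeal to Clarke's nonsmooth chain rule is also valid, the tangency argument handles the relative/exterior-boundary subtlety you flagged, and you are right that $\epsilon_L$ is not needed for the non-strict bound.
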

\begin{IEEEproof}
    The finite cell partition and compactness make $ P $ locally Lipschitz and give uniform positivity and a uniform residual margin. Hence $ P\circ\bm\rho $ and $ V=x^\top P(\bm\rho)x $ are absolutely continuous. Away from interfaces, the ordinary chain rule applies. For an absolutely continuous scalar function, its derivative is zero almost everywhere on each level set. Thus, at almost every time spent on a grid face, the scheduling velocity is tangent to that face. Continuity makes the tangential derivatives of adjacent cell restrictions agree, while \cref{lem: clarke} covers their Clarke convex hull. The cell-side inequalities therefore give \cref{eq: bounded-real} almost everywhere. Affinity in $ \dot{\bm\rho} $ extends the rate-vertex inequalities to every $ \dot{\bm\rho}\in\mathcal R $. The Schur complement then gives
    \begin{equation*}
        \dot V+\gamma^{-1}z^\top z-\gamma w^\top w<0
    \end{equation*}
    almost everywhere. Integration from the zero initial condition, followed by the limit over the time horizon, yields $ \lVert z\rVert_{\mathcal L_2}\leq\gamma\lVert w\rVert_{\mathcal L_2} $.
\end{IEEEproof}

\begin{remark}
    \Cref{lem: clarke} includes the differentiable case. If adjacent gradients agree, the classical and Clarke gradients coincide. Otherwise, enforcing \cref{eq: interface-cells} on every closed cell $ \mathcal{H}_{\mathbf{c}} $, for $ \mathbf c\in\prod_{s=1}^\ell\{1,\ldots,k_s-1\} $, covers each shared interface through the Clarke generalized derivative. \Cref{cor:locally-lipschitz-bounded-real} completes the corresponding almost-everywhere dissipation argument without requiring a smoothing step. For the rest of paper, we will only consider the behaviour of DPD-LMIs on the interiors of the cells, and the interface conditions is automatically enforced by \cref{lem: clarke}.
\end{remark}

GriD-LMIA therefore searches for continuous piecewise-polynomial decisions that are locally Lipschitz on the boundary. Smooth approximation provides an option towards global $C^1$ decision variables, when a classical differentiable decision is required. Continuous maps admit arbitrarily close smooth approximations \cite[Theorem~6.21]{Lee2012a}, locally Lipschitz functionals admit smooth graph approximations of their generalized gradients \cite[Theorem~3.7]{CwiszewskiAleksander2002}, and finite-dimensional Lipschitz mappings admit local $ C^1 $ Lipschitz approximations \cite[Proposition~5.1]{Dymond2026}. For the $\mathcal{L}_2$ gain case, a recovered decision preserves the DPD-LMI whenever it is uniformly smaller than $\epsilon_L$ for all $\bm\rho$.

\subsection{Normalization on single cell and limitation of Monomials}

On each hyper-rectangular cell $ \mathcal{H}_\mathbf{c}, \mathbf{c}\in \prod_{s=1}^\ell \{1,\ldots,k_s-1\} $, define the width $ h^{\mathbf c}_{s} $ and local coordinate $ \alpha_s $ along direction $ s=1,\ldots,\ell $:
\begin{subequations}
    \begin{align}
        h^{\mathbf c}_{s}
         & =\rho_s^{(c_s+1)}-\rho_s^{(c_s)},                         \\
        \alpha_s
         & =\frac{\rho_s-\rho_s^{(c_s)}}{h^{\mathbf c}_{s}}\in[0,1].
    \end{align}
    \label{eq: local-coordinate}
\end{subequations}
Collect the local coordinates in $ \bm\alpha=(\alpha_1,\ldots,\alpha_\ell)\in[0,1]^\ell $. Equivalently, the affine map $ \phi_{\mathbf c}:[0,1]^\ell\to\mathcal H_{\mathbf c} $ is defined componentwise by $ \phi_{\mathbf c,s}(\alpha_s)=\rho_s^{(c_s)}+h^{\mathbf c}_{s}\alpha_s $. For any cell-wise quantity $ X $, write $ X^{(\mathbf c)}=X\circ\phi_{\mathbf c} $ for its pullback to the unit box. In a cell interior and by the corresponding one-sided limit on its boundary, the chain rule gives
\begin{equation}
    \frac{\partial y_i}{\partial\rho_s}
    \bigl(\phi_{\mathbf c}(\bm{\alpha})\bigr)
    =\frac{1}{h^{\mathbf c}_{s}}
    \frac{\partial y_i^{(\mathbf c)}}{\partial\alpha_s}
    (\bm{\alpha}).
    \label{eq:cell-chain-rule}
\end{equation}
The restriction of \cref{eq: dpdlmi-template} to $ \mathcal H_{\mathbf c} $ is equivalent to the following unit-box DPD-LMI for every $ \bm{\alpha}\in[0,1]^\ell $:
\begin{equation}
    \begin{aligned}
        \mathcal F^{(\mathbf c)}(\bm{\alpha})
        ={} & \sum_{i=1}^{N}\sum_{s=1}^{\ell}
        \frac{1}{h^{\mathbf c}_{s}} T_{i,s}^{(\mathbf c)}(\bm{\alpha})
        \frac{\partial y_i^{(\mathbf c)}}{\partial\alpha_s}(\bm{\alpha}) \\
            & +F_0^{(\mathbf c)}(\bm{\alpha})
        +\sum_{i=1}^{N}
        F_i^{(\mathbf c)}(\bm{\alpha})
        y_i^{(\mathbf c)}(\bm{\alpha})\preceq0.
    \end{aligned}
    \label{eq: cell-wise-dpdlmi}
\end{equation}
The change of coordinates preserves the algebraic form of the original DPD-LMI template, as affine composition preserves polynomiality. Thus \cref{assumption: cell-polynomial} keeps every term in \cref{eq: cell-wise-dpdlmi} a polynomial in new coordinates $ \bm{\alpha}\in[0,1]^\ell $.

\begin{remark}
    The change of variables from the hyper-rectangle $ \bm{\rho}\in\mathcal{H}_{\mathbf c} $ to the unit box $ \bm\alpha\in[0,1]^\ell $ gives every physical cell the same local coordinate domain, even when nonuniform axis grids produce cells of different widths. The remaining task becomes to construct finitely many sufficient conditions for \cref{eq: cell-wise-dpdlmi}, and enumerate it over all cells.
\end{remark}

Given that \cref{assumption: cell-polynomial} holds in each cell, one needs to express both the known PD data and decision variables in a polynomial basis. An intuitive pick would be monomial:
\begin{gather}
    \label{eq: tensor-monomial-basis}
    \begin{Bmatrix}
        1 & \alpha_s & \ldots & \alpha^m_s
    \end{Bmatrix},\ s=1,\ldots,\ell.
\end{gather}
for some degree $ m\geq0 $. A polynomial on the $ \ell $-dimensional unit box $ [0,1]^{\ell} $ in the tensor monomial basis is
\begin{equation}
    p^{(\mathbf c)}(\bm{\alpha})
    =\sum_{\mathbf i\in\{0,\ldots,m\}^{\ell}}
    \bm\alpha^{\mathbf i} p^{(\mathbf c)}[\mathbf i],
    \quad
    \bm{\alpha}^{\mathbf i}
    =\prod_{r=1}^{\ell}\alpha_r^{i_r}.
    \label{eq: tensor-monomial-representation}
\end{equation}
where $ p^{(\mathbf c)}[\mathbf i] $, for $ \mathbf i \in \{0,\ldots,m\}^{\ell} $, are the $ (m+1)^{\ell} $ known matrix coefficients and do not depend on $ \bm\alpha $. However, this choice makes continuity across boundary of the cell an explicit constraint, as the following example shows.

\begin{example}[Equality constraint introduced by monomial basis]
    Let the one-dimensional axis grid be
    \[
        \mathcal G_1=\{\rho_1^{(1)}=0<\rho_1^{(2)}=1
        <\rho_1^{(3)}=2\}.
    \]
    The corresponding physical cells are $ \mathcal H_1=[0,1] $ and $ \mathcal H_2=[1,2] $.  Their forward local coordinates, as defined by \cref{eq: local-coordinate}, are $ \alpha=\rho_1 $ and $ \beta=\rho_1-1 $. Consider the piecewise polynomial function:
    \begin{equation}
        p(\rho_1)=
        \begin{cases}
            p_1(\alpha)=a_0+a_1\alpha+a_2\alpha^2,
             & \rho_1\in\mathcal H_1, \\
            p_2(\beta)=b_0+b_1\beta+b_2\beta^2,
             & \rho_1\in\mathcal H_2.
        \end{cases}
        \label{eq: shared-face-example}
    \end{equation}
    Continuity at the shared boundary $ \rho_1 = \rho_1^{(2)} $ requires
    \begin{equation}
        p_1(1)=p_2(0)
        \quad\Longleftrightarrow\quad
        a_0+a_1+a_2=b_0.
        \label{eq: monomial-interface-constraint}
    \end{equation}
    \cref{fig: grid-continuity} demonstrates the continuity requirement and the derivative jump at the shared interface taking $ (a_0,a_1,a_2)=(0,1/4,3/4) $ and $ (b_0,b_1,b_2)=(1,-5/4,1) $.  The pieces share $ p(1)=1 $, and their limiting derivatives are different as expected: $ p'_-(1)=7/4 $ and $ p'_+(1)=-5/4 $.

    \begin{figure}[htbp]
        \centering
        \begin{tikzpicture}[x=2.15cm,y=1.85cm]
            \definecolor{leftcell}{RGB}{45,115,190}
            \definecolor{rightcell}{RGB}{210,105,45}

            \draw[->,line width=0.7pt] (-0.08,0)--(2.18,0)
            node[right] {$ \rho_1 $};
            \draw[->,line width=0.7pt] (0,0)--(0,1.48)
            node[above] {$ p(\rho_1) $};
            \draw[gray!55,densely dashed,line width=0.35pt]
            (1,0)--(1,1);
            \draw[leftcell,very thick,domain=0:1,samples=60]
            plot (\x,{0.25*\x+0.75*\x*\x});
            \draw[rightcell,very thick,domain=1:2,samples=60]
            plot (\x,{1-1.25*(\x-1)+(\x-1)*(\x-1)});
            \fill[black] (1,1) circle (1.7pt);
            \node[left=2pt,font=\scriptsize] at (0,1) {$ 1 $};

            \draw[leftcell!90!black,very thick,dash pattern=on 4pt off 1.4pt]
            (0.66,0.405)--(1.12,1.21);
            \draw[rightcell!90!black,very thick,dash pattern=on 1.5pt off 1.2pt]
            (0.88,1.15)--(1.34,0.575);
            \fill[black] (1,1) circle (1.7pt);
            \node[leftcell,font=\scriptsize,fill=white,inner sep=1pt]
            at (0.36,0.38) {$ p_1 $};
            \node[rightcell,font=\scriptsize,fill=white,inner sep=1pt]
            at (1.58,1.05) {$ p_2 $};

            \path[fill=leftcell!12] (0,-0.28) rectangle (1,-0.08);
            \path[fill=rightcell!13] (1,-0.28) rectangle (2,-0.08);
            \draw[gray!65,line width=0.45pt] (0,-0.28) rectangle (2,-0.08);
            \draw[gray!65,line width=0.45pt] (1,-0.28)--(1,-0.08);
            \node[font=\scriptsize] at (0.5,-0.18) {$ \mathcal H_1 $};
            \node[font=\scriptsize] at (1.5,-0.18) {$ \mathcal H_2 $};

            \foreach \x/\j/\value in {0/1/0,1/2/1,2/3/2} {
                    \fill[black] (\x,0) circle (1.35pt);
                    \node[below=2pt,font=\scriptsize] at (\x,-0.28)
                    {$ \rho_1^{(\j)}=\value $};
                }
        \end{tikzpicture}
        \caption{The grid $ \mathcal G_1 $ (lower strip) induces two cells.
            The continuous quadratic pieces have distinct limiting tangents at
            their shared grid point $ \rho_1^{(2)}=1 $.}
        \label{fig: grid-continuity}
    \end{figure}
\end{example}

As the parameter dimension $ \ell $ and the total node count $ \prod_{s=1}^{\ell} k_s $ increase, the number of interior codimension-one faces grows, at the speed of $ \sum_{s=1}^{\ell}(k_s-2)\prod_{t\neq s}(k_t-1) $. For tensor degree $ m $, each such face requires $ (m+1)^{\ell-1} $ coefficient equalities per scalar matrix entry in an explicit monomial construction. Moreover, assembling known matrices or decisions into \cref{eq: cell-wise-dpdlmi} would require updating the corresponding continuity constraints, which is tedious and error-prone. GriD-LMIA instead uses the endpoint property of the Bernstein basis \cite{Farouki2012}. A \texttt{pdvar} shares all $ (m+1)^{\ell-1} $ Bernstein coefficients on each common face, so the adjacent cell restrictions agree there. A \texttt{pdmat} constructed from continuous known data has the same agreement. Manually supplied cell pieces with inconsistent face values violate \cref{assumption: cell-polynomial} and are not repaired by the basis itself.

\subsection{Bernstein basis and coefficient algebra}

\label{sec: bernstein-basis}

On the one-dimensional unit interval $ [0,1] $, the Bernstein basis \cite{Farouki2012} of degree $ m $ is
\begin{equation}
    B_i^m(\alpha)=\binom{m}{i}(1-\alpha)^{m-i}\alpha^i,
    \qquad i=0,\ldots,m.
    \label{eq:bernstein-univariate}
\end{equation}

The polynomial expressed in the Bernstein basis is called a Bernstein polynomial, which has several useful properties.
\begin{itemize}
    \item The Bernstein polynomial is homogeneous.
    \item The Bernstein basis sums to one,
          $$ 1= \left[(1-\alpha)+\alpha\right]^d  =\displaystyle\sum_{i=0}^{d}B_{i}^{d}(\alpha).$$
    \item The derivative can be written as a combination of two Bernstein basis functions of lower degree:
          $$ \frac{d}{d\alpha}B_i^m(\alpha) = m\left[
                  B_{i-1}^{m-1}(\alpha)-B_{i}^{m-1}(\alpha)
                  \right].$$
    \item The Bernstein polynomial has the endpoint property:
          \begin{gather*}
              B_0^m(0)=B_m^m(1)=1, \\
              B_i^m(0)=0,\quad i=1,\ldots,m, \\
              B_i^m(1)=0,\quad i=0,\ldots,m-1.
          \end{gather*}
          Hence, if $ p(\alpha)=\sum_{i=0}^{m}c_iB_i^m(\alpha) $, then $ p(0)=c_0 $ and $ p(1)=c_m $.
\end{itemize}

The last property naturally admits the value continuity by reusing the same endpoint coefficients, thus requires no separate equality constraint. For example, the polynomial in \cref{eq: shared-face-example}, with the coefficients shown in \cref{fig: grid-continuity}, has the following equivalent degree-two Bernstein representation:
\begin{gather*}
    \hat{a}=[0,\tfrac18,1],
    \quad
    \hat{b}=[1,\tfrac38,\tfrac34], \\
    \left\{\begin{aligned}
        p_1(\alpha)=                                       & \hat{a}[1]B_0^2(\alpha)
        +\hat{a}[2]B_1^2(\alpha) +\hat{a}[3]B_2^2(\alpha), & \rho_1\in\mathcal H_1,  \\
        p_2(\beta)={}                                      & \hat{b}[1]B_0^2(\beta)
        +\hat{b}[2]B_1^2(\beta)+\hat{b}[3]B_2^2(\beta),    & \rho_1\in\mathcal H_2.
    \end{aligned}\right.
\end{gather*}
The endpoints are $ p(0)=0=\hat{a}[1] $ and $ p(2)=3/4=\hat{b}[3]$. At the interface point $ \rho_1^{(2)}=1 $, the shared Bernstein coefficient satisfies $ \hat{a}[3]=\hat{b}[1] $, which gives $ p_1(1)=p_2(0)=1 $. GriD-LMIA constructs this object from its axis grid and local Bernstein coefficients through \texttt{pdmat}:
\begin{lstlisting}[style=matlab]
A = pdmat({[0,1,2]}, {{0,1/8,1}, {1,3/8,3/4}});
\end{lstlisting}


Similarly, \texttt{pdvar} creates Bernstein decision variables by allocating shared YALMIP \cite{Lofberg2004} \texttt{sdpvar} handles for their local coefficients. The following declaration creates a scalar degree-two decision on the same axis grid:
\begin{lstlisting}[style=matlab]
P = pdvar(1,1, {[0,1,2]}, Degree=2);
\end{lstlisting}

For $ \ell>1 $, GriD-LMIA uses the tensor basis
\begin{equation}
    B_{\mathbf i}^m(\bm{\alpha})
    =\prod_{s=1}^{\ell}B_{i_s}^m(\alpha_s),
    \qquad \mathbf i\in\{0,\ldots,m\}^{\ell}.
    \label{eq:bernstein}
\end{equation}

On a fixed cell $ \mathcal H_{\mathbf c} $, a degree-$ m $ polynomial matrix has the representation
\begin{equation}
    P^{(\mathbf c)}(\bm\alpha)
    =\sum_{\mathbf i\in\{0,\ldots,m\}^{\ell}}
    B_{\mathbf i}^m(\bm{\alpha})
    P^{(\mathbf c)}[\mathbf i].
    \label{eq: tensor-bernstein-representation}
\end{equation}

Thus, within the cell $ \mathcal H_{\mathbf c} $, a tensor-product polynomial of degree at most $ m $ in each of the $ \ell $ parameter directions has $ (m+1)^\ell $ local coefficient matrices.  Under \cref{assumption: cell-polynomial}, the known matrices $ T_{i,s}^{(\mathbf c)}(\bm\alpha) $, $ F_0^{(\mathbf c)}(\bm\alpha) $, and $ F_i^{(\mathbf c)}(\bm\alpha) $ admit finite cellwise Bernstein representations, possibly of different tensor degrees, while the decision functions $ \bm y^{(\mathbf c)}(\bm\alpha) $ use the designated tensor degree $ m $.

Each of the following operations used to construct \cref{eq: cell-wise-dpdlmi} preserves linearity with respect to the independent Bernstein coefficients of the decision variables $ \bm{y}^{(\mathbf c)}[\mathbf{i}] $:
\begin{enumerate}
    \item Degree elevation from $ m $ to any $ M\geq m $ is a linear map and gives the unique degree-$ M $ representation of the same polynomial.
    \item Multiplication by a known degree-$ n $ polynomial matrix is linear in the decision coefficients and produces degree-$ m+n $ polynomial matrix.
    \item The partial derivative with respect to $ \alpha_s $ is linear in the decision coefficients and reduces the degree by one only along the $ s $-th direction.
    \item Addition and subtraction are linear after the operands have been elevated to compatible tensor degrees.
\end{enumerate}
\cref{app:bernstein-algebra} gives the exact coefficient identities and corresponding public-API examples.  After all terms have been elevated to a common tensor degree $ M $, GriD-LMIA represents the cellwise \cref{eq: cell-wise-dpdlmi} as
\begin{equation}
    \mathcal F^{(\mathbf c)}(\bm{\alpha})
    =\sum_{\mathbf i\in\{0,\ldots,M\}^{\ell}}
    B_{\mathbf i}^M(\bm{\alpha})
    \mathcal C^{(\mathbf c)}[\mathbf i]\preceq 0.
    \label{eq: cell-residual-bernstein}
\end{equation}
where
\begin{equation}
    \label{eq: cell-residual-coefficient}
    \mathcal C^{(\mathbf c)}[\mathbf i]
    =\mathcal C_0^{(\mathbf c)}[\mathbf i]
    +\sum_{i=1}^{N}
    \sum_{\mathbf j\in\{0,\ldots,m\}^{\ell}}
    \mathcal C_i^{(\mathbf c)}[\mathbf i,\mathbf j]\,
    y_i^{(\mathbf c)}[\mathbf j].
\end{equation}
Here $ \mathcal C_0^{(\mathbf c)}[\mathbf i] $ and $ \mathcal C_i^{(\mathbf c)}[\mathbf i,\mathbf j] $ are known matrices determined by the cell data, differentiation, coefficient convolution, and degree elevation. Thus, every coefficient is affine in the complete set of independent decision coefficients $ \bm y^{(\mathbf c)}[\mathbf j], \mathbf j \in\{0,\ldots,m\}^{\ell} $.

\begin{example}[$ m=1,\ell=1,N=1,k_1=2 $]
    Consider a normalized cell and suppress the cell superscript. Let the known matrix data $ A $, $ T $, and $ F $, as well as the scalar decision $ y $, all be linear splines on this cell. The derivative $ \dot{\alpha} $ is considered at two rate vertices: $ \nu\in\{-1,1\} $:
    \begin{equation}
        \begin{gathered}
            \mathcal F_{\nu}(\alpha)
            =A(\alpha)+\nu T(\alpha)
            \frac{\mathrm d y}{\mathrm d\alpha}(\alpha)
            +F(\alpha)y(\alpha), \\
            \begin{alignedat}{2}
                A(\alpha) & =\sum_{i=0}^1 B_i^1(\alpha)A[i],
                          & \quad T(\alpha)                  & =\sum_{i=0}^1 B_i^1(\alpha)T[i], \\
                F(\alpha) & =\sum_{i=0}^1 B_i^1(\alpha)F[i],
                          & \quad y(\alpha)                  & =\sum_{i=0}^1 B_i^1(\alpha)y[i].
            \end{alignedat}
        \end{gathered}
        \label{eq:univariate-residual-coefficient-example}
    \end{equation}
    Given the numeric matrices $ A[0] $, $ A[1] $, $ T[0] $, $ T[1] $, $ F[0] $, and $ F[1] $, the corresponding public construction is
    \begin{lstlisting}[style=matlab]
grid = {[0 1]};
rhorate = [-1 1];
A = pdmat(grid, {A0, A1}, Degree=1);
T = pdmat(grid, {T0, T1}, Degree=1);
F = pdmat(grid, {F0, F1}, Degree=1);
y = pdvar(1, grid, Degree=1);

dy = rhodiff(y, rhorate);
Fa = A + T * dy + F * y;
\end{lstlisting}
    Define $ \overline T=\tfrac12\bigl(T[0]+T[1]\bigr) $. Because $ F(\alpha)y(\alpha) $ has degree two, all terms are represented at the common degree $ M=2 $, which results in three Bernstein coefficients:
    \begin{gather*}
        \begin{bmatrix}
            \mathcal C^{(1)}_0[r]
        \end{bmatrix}_{r=0,1,2}
        =
        \begin{bmatrix}
            A[0]                          \\
            \tfrac12\bigl(A[0]+A[1]\bigr) \\
            A[1]
        \end{bmatrix},\\
        \begin{bmatrix}
            \mathcal C^{(1)}_{1}[r,k]
        \end{bmatrix}_{
            \substack{r=0,1,2\\k=0,1}}
        =
        \begin{bmatrix}
            F[0]-\nu T[0] & \nu T[0]                    \\
            \tfrac12F[1]-\nu\overline T
                          & \tfrac12F[0]+\nu\overline T \\
            -\nu T[1]     & F[1]+\nu T[1]
        \end{bmatrix}.
    \end{gather*}

    The three coefficients in $ \mathcal F(\bm{\alpha}) $ are:
    \begin{equation*}
        \begin{aligned}
            \mathcal C^{(1)}[0]
            ={} & A[0]+\bigl(F[0]-\nu T[0]\bigr)y[0]
            +\nu T[0]y[1],                                                                     \\
            \mathcal C^{(1)}[1]
            ={} & \tfrac12\bigl(A[0]+A[1]\bigr)   +\bigl(\tfrac12F[1]-\nu\overline T\bigr)y[0] \\
                & +\bigl(\tfrac12F[0]+\nu\overline T\bigr)y[1],                                \\
            \mathcal C^{(1)}[2]
            ={} & A[1]-\nu T[1]y[0]
            +\bigl(F[1]+\nu T[1]\bigr)y[1].
        \end{aligned}
    \end{equation*}
\end{example}

The operations above ultimately return a \texttt{pdvar} handle. GriD-LMIA collects the corresponding terms internally, so users need not enumerate the cells or manage the local coefficient enumeration within each cell.

\section{Finite Certificates in pdlmi}
\label{sec:certificates}
The preceding section represents DPD-LMI \cref{eq: cell-wise-dpdlmi} equivalently as the cell-wise Bernstein expansion \cref{eq: cell-residual-bernstein} under \cref{assumption: cell-polynomial}. The resulting matrix inequality remains semi-infinite because it must hold for every $ \bm{\alpha}\in[0,1]^\ell $.

Flipping the sign of $ \mathcal F^{(\mathbf c)}(\bm{\alpha}) $ gives
\begin{equation}
    -\mathcal F^{(\mathbf c)}(\bm{\alpha})=\sum_{\mathbf i\in\{0,\ldots,M\}^{\ell}}
    B_{\mathbf i}^{M}(\bm{\alpha})
    \bigl(-\mathcal C^{(\mathbf c)}[\mathbf i]\bigr)\succeq0,
    \label{eq:cell-positivity-certificate}
\end{equation}
In the scalar case, this condition reduces to nonnegativity of a real polynomial on a unit box. Algebraic positivity certificates provide finite sufficient representations of this condition \cite{Powers2021}. The term \textit{certificate} comes from real algebraic geometry and the Positivstellensatz. Here the coefficients $ \mathcal C^{(\mathbf c)}[\mathbf i] $ are matrix-valued, so the implemented conditions use matrix positivity certificates \cite{Savchuk2012,Scherer2006}. GriD-LMIA presents its six fixed-order certificates, which can be categorized into two families: coefficient-based and SOS based certificates.

\subsection{Coefficient-based certificate}
\label{sec: Coefficient-based certificate}

\paragraph{(Direct) Bernstein coefficient certificate}\mbox{}\par
The direct Bernstein sufficient test requires
\begin{equation}
    -\mathcal C^{(\mathbf c)}[\mathbf i]\succeq0
    \quad\forall \mathbf i \in\{0,\ldots,M\}^\ell.
    \label{eq: direct-certificate}
\end{equation}

The matrices $ \mathcal C^{(\mathbf c)}[\mathbf i] $ in \cref{eq: cell-residual-coefficient} are independent of $ \bm{\alpha} $ and affine in the decisions. Hence, \cref{eq: direct-certificate} is a finite set of LMIs that can be exported to YALMIP \cite{Lofberg2004}.
\begin{lstlisting}[style=matlab]
direct = Fa <= 0;
finiteConstraints = direct.toYalmip;
\end{lstlisting}

This test introduces one matrix condition per Bernstein coefficient and is the default used by the overloaded \texttt{<=} and \texttt{>=} operators. Grid refinement and degree elevation may reduce its conservatism, but neither of those guarantees feasibility at a prescribed finite order \cite{Farouki2012,Harris2023}.
\paragraph{P\'olya's relaxation}\mbox{}\par

Recall the second property of the Bernstein basis discussed in \cref{sec: bernstein-basis}. P\'olya's theorem implies that a strictly positive scalar polynomial on the box has positive coefficients after sufficiently large degree elevation.
\begin{lemma}[P\'olya's theorem in Bernstein coordinates \cite{Hardy1952,Polya1928}]
    Let $ p(\bm{\alpha}) $ have tensor Bernstein degree $ M $ in each of the $ \ell $ variables. For any $ d\geq0 $, and for all $ \bm{\alpha}\in[0,1]^\ell $:
    \begin{equation*}
        \begin{aligned}
            1
             & =\prod_{s=1}^{\ell}\left[(1-\alpha_s)+\alpha_s\right]^d  =
            \prod_{s=1}^{\ell}\left[
                                  \sum_{k_s=0}^{d}B_{k_s}^{d}(\alpha_s)\right] \\
             & =\sum_{\mathbf k\in\{0,\ldots,d\}^{\ell}}
            B_{\mathbf k}^{d}(\bm{\alpha})
        \end{aligned}
    \end{equation*}
    Hence, multiplying $ p $ by this identity gives its degree-$ M+d $ Bernstein representation
    \begin{equation*}
        \underbrace{\left[
                \sum_{\mathbf k\in\{0,\ldots,d\}^{\ell}}
                B_{\mathbf k}^{d}(\bm{\alpha})
                \right]}_{1}  p(\bm{\alpha})
        =\sum_{\mathbf j\in\{0,\ldots,M+d\}^{\ell}}
        B_{\mathbf j}^{M+d}(\bm{\alpha})
        \widehat{p}[\mathbf j].
    \end{equation*}
    If $ p(\bm{\alpha})>0 $ on $ [0,1]^\ell $, then its elevated Bernstein coefficients $ \widehat{p}[\mathbf j] $ are positive for a sufficiently large $ d $.
\end{lemma}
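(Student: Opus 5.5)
The argument has two parts. The identity part is direct. The positivity part reduces, by the tensor-product structure, to a uniform-approximation estimate for Bernstein coefficients under degree elevation.

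\textbf{Identity.} By the binomial theorem, $\sum_{k=0}^{d}B_k^d(\alpha_s)=[(1-\alpha_s)+\alpha_s]^d=1$ for each $s$. Taking the product over $s=1,\ldots,\ell$ and expanding gives $\sum_{\mathbf k}B_{\mathbf k}^d(\bm\alpha)=1$, by the definition \eqref{eq:bernstein} of the tensor basis.

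Next I compute the product $B_{\mathbf k}^d B_{\mathbf i}^M$ coordinatewise:
\[
B_{k}^d(\alpha)B_{i}^M(\alpha)=\frac{\binom{d}{k}\binom{M}{i}}{\binom{M+d}{i+k}}B_{i+k}^{M+d}(\alpha).
\]
Multiplying $p=\sum_{\mathbf i}B_{\mathbf i}^M p[\mathbf i]$ by the identity then yields the elevated coefficients explicitly:
\[
\widehat p[\mathbf j]=\sum_{\mathbf i+\mathbf k=\mathbf j}\prod_{s=1}^{\ell}\frac{\binom{d}{k_s}\binom{M}{i_s}}{\binom{M+d}{j_s}}\,p[\mathbf i].
\]
These coefficients are the degree-$(M+d)$ representation, which is unique because the tensor Bernstein basis is linearly independent.

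\textbf{Positivity.} The key step is to show that the elevated coefficients approach point values of $p$:
\[
\bigl|\widehat p[\mathbf j]-p\bigl(\mathbf j/(M+d)\bigr)\bigr|\le C/(M+d),
\]
where $C$ depends only on $M$, $\ell$, and the original coefficients.

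1. Observe that for each $\mathbf j$ the weights $w_{\mathbf i}(\mathbf j)=\prod_s\binom{d}{j_s-i_s}\binom{M}{i_s}/\binom{M+d}{j_s}$ are nonnegative and sum to one. Indeed, in each coordinate they form a hypergeometric distribution, by Vandermonde's identity.

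2. Compare the hypergeometric weights with the binomial probabilities $\binom{M}{i_s}t_s^{i_s}(1-t_s)^{M-i_s}$ at $t_s=j_s/(M+d)$. Both are polynomials of fixed degree $M$ in $j_s$ and $d$. The ratio $\binom{d}{j-i}/\binom{M+d}{j}$ equals $\binom{M}{i}^{-1}$ times a product of $M$ factors, each of the form $(j-r)/(M+d-r')$ or $(M+d-j-r)/(M+d-r')$. Each such factor differs from $t$ or $1-t$ by $O(1/(M+d))$, uniformly in $j$.

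3. Consequently $w_{\mathbf i}(\mathbf j)=B_{\mathbf i}^M(\mathbf t)+O(1/(M+d))$ uniformly in $\mathbf j$. This gives
\[
\widehat p[\mathbf j]=\sum_{\mathbf i}B_{\mathbf i}^M(\mathbf t)\,p[\mathbf i]+O\bigl(1/(M+d)\bigr)=p(\mathbf t)+O\bigl(1/(M+d)\bigr),
\]
with the constant bounded by $\ell\,M^2\max_{\mathbf i}|p[\mathbf i]|$ up to a fixed factor.

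4. Since $[0,1]^\ell$ is compact and $p$ is continuous and strictly positive there, $p\ge\mu>0$ on the box. Choosing $d$ with $C/(M+d)<\mu$ makes every $\widehat p[\mathbf j]$ positive.

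\textbf{Main obstacle.} The hard part is step 2: obtaining the uniform $O(1/(M+d))$ estimate on the hypergeometric-to-binomial ratio, including near $t_s\in\{0,1\}$. There some factors $(j-r)$ can vanish or turn negative. This is harmless, because the corresponding binomial coefficient $\binom{d}{j-i}$ is itself zero and the term drops out.

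I would handle this by writing both weights as degree-$M$ polynomials in $t_s$ and bounding the coefficient differences directly, rather than working with ratios. Alternatively, one can simply invoke P\'olya's classical bound via the homogenization $\alpha_s\mapsto(\alpha_s,1-\alpha_s)$ on a product of simplices, as in \cite{Hardy1952,Polya1928}. That route gives the same conclusion in one step and is the fallback I would use.
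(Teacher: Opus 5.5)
Your proof is correct. It agrees with the paper on the identity: the binomial theorem, then the product rule $B^d_kB^M_i=\binom{d}{k}\binom{M}{i}{\binom{M+d}{i+k}}^{-1}B^{M+d}_{i+k}$, which reproduces the paper's appendix degree-elevation formula. It differs on positivity, which the paper does not prove and simply attributes to P\'olya and Hardy--Littlewood--P\'olya. You prove it directly, showing $\widehat p[\mathbf j]=p(\mathbf j/(M+d))+O(1/(M+d))$ by comparing hypergeometric and binomial weights. This gives an explicit admissible $d$ in terms of $\min_{[0,1]^\ell}p$, $M$, $\ell$ and $\max_{\mathbf i}|p[\mathbf i]|$. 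Because it runs coordinatewise, it also handles the tensor (product-of-intervals) setting natively. Your fallback is weaker than you suggest, and the same caveat applies to the paper's citation. The classical theorem concerns forms positive on a single simplex. The homogenization in $(\alpha_s,1-\alpha_s)$ is positive only on a product of simplices and vanishes wherever some block sum $x_s+y_s$ is zero. So it needs a multi-simplex extension rather than the cited statement itself, and your main route is the better one.

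Two minor fixes. First, $\binom{d}{j-i}/\binom{M+d}{j}$ equals $\prod_{r<i}(j-r)\prod_{r<M-i}(M+d-j-r)\big/\prod_{r<M}(M+d-r)$ itself, with no factor ${\binom{M}{i}}^{-1}$; the $\binom{M}{i}$ in $w_i$ is then exactly the one in $B^M_i$. Second, the boundary ``obstacle'' needs no special care. Each factor differs from $t$ or $1-t$ by at most $\delta=(M-1)/(d+1)$, whatever its sign. Telescoping over the $M$ factors gives $\sum_i|w_i(j)-B^M_i(t)|\le(1+2\delta)^M-1=O(M^2/d)$ per coordinate. A second telescoping over the coordinates, using that both weight families are nonnegative with unit sum, only multiplies this by $\ell$.
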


In practice, the actual $ d $ is unknown, but GriD-LMIA allows users to specify a nonnegative integer $ d $ to elevate the Bernstein degree of $ \mathcal F^{(\mathbf c)}(\bm{\alpha}) $ to $ M+d $, and impose the direct test on the elevated coefficients $ \widehat{\mathcal C}^{(\mathbf c)}[\mathbf j], \forall \mathbf j \in\{0,\ldots,M+d\}^\ell $.

\begin{lstlisting}[style=matlab]
direct = Fa <= 0;
polya = direct.usePolya(d);
finiteConstraints = polya.toYalmip;
\end{lstlisting}

\subsection{SOS-based Certificate}
\label{sec: SOS-based certificate}
Let $\bm{a}=\begin{bmatrix}
        r & \cdots & r
    \end{bmatrix}\in\mathbb{R}^\ell$, define the tensor Bernstein basis and related Gram form:
\begin{equation}
    \begin{aligned}
        \bm{b}_{\bm{a}}(\bm{\alpha})
         & =\bigotimes_{s=1}^{\ell}
            [B_0^{a_s}(\alpha_s),\ldots,B_{a_s}^{a_s}(\alpha_s)]^\top, \\
        S(\bm{\alpha})
         & =(\bm{b}_{\bm{a}}(\bm{\alpha})\otimes I)^\top Q
        (\bm{b}_{\bm{a}}(\bm{\alpha})\otimes I),
    \end{aligned}
    \label{eq:bernstein-gram-basis}
\end{equation}
with $Q\succeq0$. For a even $M$, $-\mathcal F^{(\mathbf c)}$ can be equivalently expressed into a Gram form with block-Hankel matrix. However, requiring this block-Hankel matrix to be positive semi-definite is generally restrictive, which is demonstrated by the following counterexample.
\begin{example}
    Consider one parameter, one physical cell, and residual degree $M=4$. Let the scalar Bernstein coefficients be
    \begin{gather*}
        \left[\mathcal C^{(1)}[i]\right]_{i=0}^{4}
        =\begin{bmatrix}
            -1 & -3 & -5 & -7 & -1
        \end{bmatrix}^\top.
    \end{gather*}
    Every coefficient is negative, so the Direct certificate in \cref{eq: direct-certificate} holds. However, the corresponding coefficient-induced block-Hankel matrix satisfies
    \begin{gather*}
        -\mathcal H^{(1)}
        =\begin{bmatrix}
            1 & 3 & 5 \\
            3 & 5 & 7 \\
            5 & 7 & 1
        \end{bmatrix}\not\succeq0,
    \end{gather*}
    because its leading $2\times2$ principal minor has determinant $-4$. Thus, the block-Hankel sign test fails even though the Direct certificate holds.
\end{example}
The counterexample presents a problem: the polynomial expressed by a single Gram matrix has its own limitation and forcing it to be negative semidefinite is too restrictive. An SOS certificate therefore, does not fix the Gram matrix from the coefficients. Instead, it introduces free positive-semidefinite Gram matrices in prescribed basis and matches the resulting polynomial identity coefficientwise in the Bernstein basis. This separates the polynomial identity from the choice of Gram matrix and allows the geometry of the cell to enter through nonnegative weights.

Note that on a normalized physical cell, these weights are supplied by the unit box $[0,1]^\ell$, which is a basic closed semi-algebraic set generated by
\begin{gather*}
    g_s(\bm{\alpha})=\alpha_s(1-\alpha_s)\geq 0, s=1,\ldots,\ell.
\end{gather*}
These specific box generators produce an Archimedean
quadratic module \cite{Powers2021,Shang2025}, which motivates denominator-free SOS identities. At absolute Gram order $r$, the basis degrees for the corresponding Gram forms are chosen as
\begin{equation}
    \begin{aligned}
        S_0(\bm{\alpha}):
         & \quad \bm a=(r,\ldots,r),                        \\
        g_s(\bm{\alpha})S_s(\bm{\alpha}):
         & \quad \bm a=(r,\ldots,
        \underbrace{r-1}_{s\text{-th direction}},\ldots,r), \\
        \left(\prod_{s\in J}g_s(\bm{\alpha})\right)S_J(\bm{\alpha}):
         & \quad a_t=\begin{cases}
                         r-1, & t\in J,    \\
                         r,   & t\notin J.
                     \end{cases}
    \end{aligned}
    \label{eq:gram-basis-degree-choice}
\end{equation}

\begin{example}
    \label{ex: gram-basis-degree-choice}
    When $ \ell=3 $, all Gram terms and their tensor Bernstein bases are shown in \cref{tab: weighted-gram-bases}.
    \begin{table}[htbp]
        \centering
        \caption{Weighted Gram terms and tensor Bernstein bases for $ \ell=3 $.}
        \label{tab: weighted-gram-bases}
        {\footnotesize
            \setlength{\tabcolsep}{3pt}
            \begin{tabular}{@{}cccc@{}}
                \toprule
                \textnormal{Term} & \textnormal{Basis}
                                  & \textnormal{Term}          & \textnormal{Basis}      \\
                \midrule
                $ S_0 $           & $ \bm b_{r,r,r} $
                                  & $ g_1g_2S_{\{1,2\}} $      & $ \bm b_{r-1,r-1,r} $   \\
                $ g_1S_1 $        & $ \bm b_{r-1,r,r} $
                                  & $ g_1g_3S_{\{1,3\}} $      & $ \bm b_{r-1,r,r-1} $   \\
                $ g_2S_2 $        & $ \bm b_{r,r-1,r} $
                                  & $ g_2g_3S_{\{2,3\}} $      & $ \bm b_{r,r-1,r-1} $   \\
                $ g_3S_3 $        & $ \bm b_{r,r,r-1} $
                                  & $ g_1g_2g_3S_{\{1,2,3\}} $ & $ \bm b_{r-1,r-1,r-1} $ \\
                \bottomrule
            \end{tabular}
        }
    \end{table}
\end{example}

An SOS-based certificate represents the polynomial matrix $ -\mathcal F^{(\mathbf c)} $ as a weighted sum of Gram forms and matches the resulting polynomial identity coefficient-wise in the Bernstein basis. For $ \ell=1 $, the scalar Markov--Luk\'acs theorem gives an exact characterization of scalar polynomial non-negativity on the unit interval using two Gram forms \cite{Roh2006,Powers2021}.
\begin{lemma}[Markov--Luk\'acs theorem]
    \label{lem: markov-lukacs}
    Let $ p(\alpha) $ be a real univariate polynomial of degree $ M $, and let $ r=\lfloor M/2\rfloor $. Then $ p(\alpha)\geq0 $ for every $ \alpha\in[0,1] $ if and only if it admits the representation
    \begin{equation}
        \label{eq: markov-lukacs}
        p(\alpha)=
        \begin{cases}
            s_0(\alpha)+\alpha(1-\alpha)s_1(\alpha),  & M=2r,   \\
            (1-\alpha)s_L(\alpha)+\alpha s_U(\alpha), & M=2r+1,
        \end{cases}
    \end{equation}
    where $ s_0 $, $ s_1 $, $ s_L $, and $ s_U $ are SOS polynomials. In the even case, their degrees are at most $ 2r $ and $ 2r-2 $, respectively, and the $ s_1 $ term is absent when $ r=0 $. In the odd case, both $ s_L $ and $ s_U $ have degree at most $ 2r $.
\end{lemma}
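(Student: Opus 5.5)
The argument has two directions. The ``if'' direction is immediate. In either case of \cref{eq: markov-lukacs}, each summand is a sum of squares multiplied by a weight that is nonnegative on $[0,1]$. These weights are $1$, $\alpha(1-\alpha)$, $1-\alpha$ and $\alpha$. Hence $p\geq0$ on $[0,1]$. Matching the degree bounds is only bookkeeping, since the stated degrees of $s_0,s_1,s_L,s_U$ keep every summand at degree at most $M$.

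For the ``only if'' direction I would proceed by factorization. Over $\R$, I would write $p$ as a positive constant times a product of three kinds of factors. The first kind is irreducible quadratics, each of which is positive everywhere. The second kind is real roots of even multiplicity lying in $(0,1)$. The third kind is real roots $\beta$ lying outside $(0,1)$, together with the endpoints.

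Nonnegativity on $[0,1]$ forces every root in the open interval $(0,1)$ to have even multiplicity. A root $\beta\leq0$ contributes the factor $\alpha-\beta$, which I would write as $(\alpha-\beta)=(-\beta)(1-\alpha)+(1-\beta)\alpha$. This is a combination of $1-\alpha$ and $\alpha$ with nonnegative coefficients. A root $\beta\geq1$ contributes the factor $\beta-\alpha$, up to sign, which I would write as $(\beta-1)\alpha+\beta(1-\alpha)$. The sign of the leading constant is absorbed so that the product is positive on $(0,1)$.

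With this, $p$ becomes a positive multiple of $q^2$ times a product of linear factors of the form $a(1-\alpha)+b\alpha$ with $a,b\geq0$. Here $q^2$ collects the even roots and the quadratics, since a positive quadratic is itself a sum of two squares.

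The key algebraic step is to show that the set of polynomials of the form $(1-\alpha)s_L+\alpha s_U$ and of the form $s_0+\alpha(1-\alpha)s_1$ is closed under multiplication by such linear factors, with the parity switching at each step. I would use the identities
\begin{gather*}
(1-\alpha)^2=(1-\alpha)^2\cdot1,\quad \alpha^2=\alpha^2\cdot1,\quad \alpha(1-\alpha)=\alpha(1-\alpha)\cdot1,\\
(1-\alpha)\alpha\,s=\alpha(1-\alpha)s .
\end{gather*}
Multiplying an odd-form expression by $a(1-\alpha)+b\alpha$ gives
\[
a(1-\alpha)^2s_L+b\alpha^2s_U+\alpha(1-\alpha)(as_U+bs_L).
\]
The first two terms are sums of squares and the last term is $\alpha(1-\alpha)$ times a sum of squares, so the result is in even form.

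Multiplying an even-form expression by $a(1-\alpha)+b\alpha$ gives
\[
(1-\alpha)\bigl(as_0+b\alpha^2s_1\bigr)+\alpha\bigl(bs_0+a(1-\alpha)^2s_1\bigr).
\]
To reach this I would use $\alpha(1-\alpha)\cdot(1-\alpha)=\alpha(1-\alpha)^2$ and attach that term to the $\alpha$ side, and similarly attach $\alpha(1-\alpha)\cdot\alpha$ to the $(1-\alpha)$ side. The result is in odd form.

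Induction on the number of linear factors, starting from $q^2$ in even form, then yields the representation. The parity of the number of linear factors matches the parity of $\deg p$, so the final form agrees with the case split on $M$.

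The main obstacle is the degree control. The claimed bounds require that $\deg s_0\leq 2r$, $\deg s_1\leq 2r-2$, and $\deg s_L,\deg s_U\leq2r$. Each multiplication step above raises the degrees of the sum-of-squares multipliers by exactly the amount allowed, because $(1-\alpha)^2s_1$ has degree $\deg s_1+2$, and so on. I would track these degrees in the induction to confirm the bounds.

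One subtlety needs separate care: when $\deg p<M$, the polynomial $p$ must be treated as having ``roots at infinity''. I would handle this by passing to the homogenized form in $(1-\alpha,\alpha)$. In that form each missing degree corresponds to the factor $(1-\alpha)+\alpha=1$, which is exactly a linear factor with $a=b=1$. This keeps the bookkeeping at the nominal degree $M$ and is consistent with the homogeneity property of the Bernstein basis noted in \cref{sec: bernstein-basis}. Finally, when $r=0$ in the even case, $p$ is a nonnegative constant, and the $s_1$ term is absent.
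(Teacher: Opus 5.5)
The paper gives no proof of this lemma; it quotes the result from \cite{Roh2006,Powers2021} and uses it only to motivate the interval Gram certificate.

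Your proof is correct and self-contained, and it follows the classical factorization route. Each step checks out:
\begin{itemize}
\item The rewriting of each root outside $(0,1)$, endpoints included, as $a(1-\alpha)+b\alpha$ with $a,b\geq0$ is valid.
\item Both product identities are correct.
\item The degree bookkeeping holds. Passing from odd degree $2r+1$ to even degree $2r'=2r+2$ gives $\deg s_0\leq 2r'$ and $\deg s_1\leq 2r'-2$. Passing from even degree $2r$ to odd degree $2r+1$ keeps $\deg s_L,\deg s_U\leq 2r$.
\end{itemize}

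One small correction is needed. The product of the positive quadratics and the even-multiplicity interior factors is a sum of squares $\sigma$ (in fact, univariately, a sum of two squares). It is not in general a single square $q^{2}$. Your induction only needs the base case to be in even form with $s_0=\kappa\sigma$ and $s_1=0$, so nothing else changes.

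Compared with the bare citation, your route buys explicit, constructive degree control. Through the padding by $(1-\alpha)+\alpha=1$, it also gives the version with nominal degree $M\geq\deg p$. That is the form the paper actually needs, because the assembled Bernstein degree $M$ of a residual can exceed its true degree after cancellation.

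What your route does not buy is anything about the matrix certificate GriD-LMIA assembles. The argument rests on the real root structure of a scalar polynomial. That limitation is consistent with the paper's choice to use the matrix Markov--Luk\'acs form only as a sufficient condition.
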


For a symmetric polynomial matrix inequality, GriD-LMIA applies the same weight matching to matrix-SOS Gram forms. This paper uses the resulting matrix Markov--Luk\'acs certificate as a sufficient condition and does not infer matrix exactness from the scalar theorem. In particular, an odd assembled degree $ M=2r+1 $ uses two complete degree-$ r $ Gram bases, weighted by $ 1-\alpha $ and $ \alpha $, respectively.

For $ \ell\geq2 $, GriD-LMIA provides four optional sufficient SOS paths: Putinar \cite{Putinar1993,Magron2015}, SparsePutinar, FullBox \cite{Magron2015}, and Sparse FullBox. All four assemblers use the even matching degree $ 2r $. An odd degree $ M $ is therefore elevated to degree $ M+1 $ before coefficient matching. For a user-specified $ r $, the assemblers elevate the residual $ \mathcal{F}^{(\mathbf c)} $ to degree $ 2r $ and then match its Bernstein coefficients.

The default is the minimum admissible absolute Gram order:
\begin{equation}
    r_{\mathrm{default}}=r_{\min}=
    \begin{cases}
        \lfloor M/2\rfloor, & \ell=1,    \\
        \lceil M/2\rceil,   & \ell\geq2,
    \end{cases}
    \label{eq:default-gram-order}
\end{equation}

\paragraph{Putinar quadratic-module certificate}\mbox{}\par
For $ \ell\geq2 $, the Archimedean quadratic-module structure of the box generators motivates the implemented Putinar certificate \cite{Putinar1993,Magron2015}:
\begin{equation}
    -\mathcal F^{(\mathbf c)}
    =S_0+\sum_{s=1}^{\ell}g_sS_s .
    \label{eq:putinar}
\end{equation}
\begin{lstlisting}[style=matlab]
direct = Fa <= 0;
putinar = direct.usePutinar(r);
finiteConstraints = putinar.toYalmip;
\end{lstlisting}

\paragraph{SparsePutinar banded-Gram certificate}\mbox{}\par
SparsePutinar retains the unweighted and singleton-weighted terms in \cref{eq:putinar} but replaces each dense Gram matrix with a banded Gram matrix. For a tensor basis indexed by $ \mathcal I_{\bm a}:=\prod_{t=1}^{\ell}\{0,\ldots,a_t\} $, write its block Gram matrix as $ Q=[Q[\mathbf i,\mathbf j]]_{\mathbf i,\mathbf j\in\mathcal I_{\bm a}} $. We say that $ Q $ has bandwidth $ \omega\geq1 $ when
\begin{gather*}
    Q[\mathbf i,\mathbf j]=0
    \quad\text{if}\quad
    \lVert\mathbf i-\mathbf j\rVert_\infty>\omega-1.
\end{gather*}
Written as $Q^{(\omega)}$ with bandwidth $ \omega $.
Let $ \bm e_s $ denote an indicator vector where the $ s $th entry is $1$ and $0$ otherwise. Define $ \bm a_0=r\bm 1 $ and $ \bm a_s=\bm a_0-\bm e_s $. The SparsePutinar certificate is
\begin{equation}
    \begin{aligned}
        -\mathcal F^{(\mathbf c)}(\bm\alpha)
                       & =S_0^{(\omega)}(\bm\alpha)
        +\sum_{s=1}^{\ell}g_s(\bm\alpha)S_s^{(\omega)}(\bm\alpha),   \\
        S_0^{(\omega)}
                       & =(\bm b_{\bm a_0}(\bm\alpha)\otimes I)^\top
        Q_0^{(\omega)}
        (\bm b_{\bm a_0}(\bm\alpha)\otimes I),                       \\
        S_s^{(\omega)}
                       & =(\bm b_{\bm a_s}(\bm\alpha)\otimes I)^\top
        Q_s^{(\omega)}
        (\bm b_{\bm a_s}(\bm\alpha)\otimes I),                       \\
        Q_0^{(\omega)} & \succeq0,\qquad Q_s^{(\omega)}\succeq0 .
    \end{aligned}
    \label{eq:sparse-putinar}
\end{equation}
\begin{lstlisting}[style=matlab]
omega = 2;
direct = Fa <= 0;
sparsePutinar = direct.useSpPut(omega, r);
finiteConstraints = sparsePutinar.toYalmip;
\end{lstlisting}
Here, $\omega$ is the common mathematical symbol for the sparse-control argument preceding $r$. The released \texttt{useSpPut} interface names and stores this argument as \texttt{cliqueSize}/\texttt{CliqueSize} which corresponds to the bandwidth above. The released \texttt{useSpBox} interface names and stores the corresponding argument as \texttt{bandWidth}/\texttt{BandWidth}.

\paragraph{FullBox preordering certificate}\mbox{}\par
For $ \ell\geq2 $, FullBox enlarges the Putinar index family to the complete power set $ \mathscr{P}(\{1,\ldots,\ell\}) $:
\begin{equation}
    -\mathcal F^{(\mathbf c)}
    =\sum_{J\in \mathscr{P}(\{1,\ldots,\ell\})}
    \left(\prod_{s\in J}g_s\right)S_J,
    \label{eq:preorder}
\end{equation}
where $ S_{\varnothing}=S_0 $. For example, using the case in \cref{ex: gram-basis-degree-choice}, Putinar certificate considers only the left column of \cref{tab: weighted-gram-bases}, while FullBox considers both columns.

\cref{eq:preorder} is the full preordering generated by the quadratic box polynomials $ g_s=\alpha_s(1-\alpha_s) $. Some authors refer to this as a restricted Schm\"udgen-type preordering on quadratic box generators \cite{Magron2015}.
\begin{lstlisting}[style=matlab]
direct = Fa <= 0;
fullBox = direct.useFullBox(r);
finiteConstraints = fullBox.toYalmip;
\end{lstlisting}

\paragraph{Sparse FullBox banded-Gram certificate}\mbox{}\par

Putinar and FullBox assembly introduce high-dimensional symmetric Gram matrices. For example, $ S_0 $ contains a Gram matrix $ Q_0 $ with $ (r+1)^\ell $ block rows and columns, each block having the dimension of $ \mathcal{C}^{(\mathbf c)}[\mathbf i] $. For $ \ell>1 $, the resulting Gram variables can dominate the size of the assembled SDP.

To exploit sparsity, consider the FullBox representation at a fixed even Bernstein degree $ M=2r $. The Direct certificate in \cref{eq: direct-certificate} is represented by diagonal Gram blocks associated with the unweighted basis coefficients.

Sparse FullBox relaxes this block-diagonal structure with banded Gram matrices while retaining the complete FullBox traversal. Let $ \mathbf 1_{\{s\in J\}} $ be an $\ell$-dimensional vector whose $s$th entry is one when $ s\in J $ and zero otherwise. Define $ \bm a_J=(a_{J,1},\ldots,a_{J,\ell}) $, where $ a_{J,s}=r-\mathbf 1_{\{s\in J\}} $ as prescribed by \cref{eq:gram-basis-degree-choice}. Its certificate is
\begin{equation}
    \begin{aligned}
        -\mathcal F^{(\mathbf c)}(\bm\alpha)
         & =\sum_{J\in\mathscr P(\{1,\ldots,\ell\})}
        \prod_{s\in J}g_s(\bm\alpha)\,S_J^{(\omega)}(\bm\alpha), \\
        S_J^{(\omega)}
         & =(\bm b_{\bm a_J}(\bm\alpha)\otimes I)^\top
        Q_J^{(\omega)}
        (\bm b_{\bm a_J}(\bm\alpha)\otimes I),                   \\
        Q_J^{(\omega)}
         & \succeq0 .
    \end{aligned}
    \label{eq:sparse-preorder}
\end{equation}
Each $ Q_J^{(\omega)} $ has bandwidth $ \omega $ under the preceding block definition. At $ \omega=2 $, adjacent basis labels are coupled, which gives a block-tridiagonal pattern in one parameter. The default is $ \omega=2 $.
\begin{lstlisting}[style=matlab]
omega = 2;
direct = Fa <= 0;
sparseFullBox = direct.useSpBox(omega, r);
finiteConstraints = sparseFullBox.toYalmip;
\end{lstlisting}

\begin{remark}
    When $ \ell=1 $, dense Putinar and FullBox use the matrix Markov--Luk\'acs-form representation induced by \cref{eq: markov-lukacs}, while the sparse paths replace the corresponding dense Gram blocks with banded Gram matrices. Increasing $ \omega $ admits wider bands, and a bandwidth spanning every active basis direction recovers the corresponding dense Gram matrix. Sparse FullBox at $ \omega=1 $ is implemented as the Direct certificate. The fixed-order certificate containment relations are summarized in \cref{fig:certificate-cone-burden}. Moving toward a smaller cone can reduce the maximum Gram-block dimension while imposing a stronger sufficient condition.

    \begin{figure}[htbp]
    \centering
    \begin{tikzpicture}[
            certificate/.style={
                    draw=black!65,
                    rounded corners=1.5pt,
                    minimum height=0.62cm,
                    minimum width=1.65cm,
                    align=center,
                    font=\footnotesize
                },
            relation/.style={->,line width=0.55pt},
            relation label/.style={
                    fill=white,
                    inner sep=1pt,
                    font=\scriptsize,
                    sloped
                }
        ]
        \node[certificate] (fullbox) at (0,0) {FullBox};
        \node[certificate,minimum width=2.25cm]
        (sparse) at (3.05,0) {Sparse FullBox};
        \node[certificate] (direct) at (6.10,0) {Direct};
        \node[certificate] (putinar) at (0,-1.30) {Putinar};
        \node[certificate,minimum width=2.25cm]
        (sparseputinar) at (3.05,-1.30) {SparsePutinar};
        \node[certificate] (polya) at (6.10,-1.30) {P\'olya};

        \draw[relation] (fullbox) -- node[relation label,above] {$ \supseteq $}
        (sparse);
        \draw[relation] (sparse) -- node[relation label,above] {$ \supseteq $}
        (direct);
        \draw[relation] (fullbox) -- node[midway,left,fill=white,inner sep=1pt,font=\scriptsize] {$ \supseteq $}
        (putinar);
        \draw[relation] (putinar) -- node[relation label,above] {$ \supseteq $}
        (sparseputinar);
        \draw[relation] (polya) -- node[midway,right,fill=white,inner sep=1pt,font=\scriptsize] {$ \supseteq $}
        (direct);
    \end{tikzpicture}
    \caption{Implemented fixed-order certificate containments. Only the displayed inclusions are asserted.}
    \label{fig:certificate-cone-burden}
\end{figure}

    All six certificate paths impose sufficient conditions for the continuous semidefinite PD-LMI. For the strict $\mathcal{L}_2$ gain, the margin shift in \cref{eq: enumerated-lmi} is applied before certificate assembly. Its zero-margin closure is a finite semidefinite model but does not, by solver status alone, certify the strict $\mathcal{L}_2$ gain. Failure at a fixed $ m $, $ d $, $ r $, or $ \omega $ is inconclusive. The direct and P\'olya paths use coefficient inequalities, while the SOS paths add Gram variables and coefficient-matching identities through \cref{eq:putinar,eq:sparse-putinar,eq:sparse-preorder,eq:preorder}. The reported sizes and solver objectives describe the stated numerical configurations rather than a universal ordering.
\end{remark}

\section{Numerical Evaluation and Software Comparison}
\label{sec:simulation}

All experiments were run on 64-bit Windows 11 with an Intel Core i7-11700K processor and 32 GB of memory. The software environment comprised GriD-LMIA v1.4.0, MATLAB R2026a Update 3, YALMIP dated 26 June 2025, ROLMIP, LPVTools 2.0.0, SeDuMi 1.3.7, SDPT3 4.0, and MOSEK 11.2.

\subsection[One-parameter degree, grid, and certificate sensitivity]{Degree, grid, and certificate sensitivity when $ \ell=1 $}

The one-parameter study restores the example from \cite[Example 1]{Masubuchi1998Spline}. Consider the following LPV system:
\begin{gather}
    \begin{aligned}
        A(\rho) & =\begin{bmatrix}-1&0.5\\-1&-2\end{bmatrix}
        +\rho\begin{bmatrix}-1.3&-20\\2&-10\end{bmatrix},
        \label{eq:masubuchi-plant}                           \\
        B(\rho) & =\begin{bmatrix}1&-4\\-1&-1\end{bmatrix}
        +\rho\begin{bmatrix}2.2&0.5\\-6&-5\end{bmatrix}.
        \nonumber
    \end{aligned},\\
    C(\rho)=I_2,\quad D(\rho)=0.
    \label{eq: masubuchi-plant}
\end{gather}
with $ \rho\in[0,1] $ and $ \dot\rho\in[-1,1] $. The reported quantity is the smallest solver-returned $ \gamma $ for the assembled model, using the zero-margin DPD-LMI closure of \cref{eq: bounded-real} together with $ P(\rho)\succ0 $. The finite models use YALMIP \cite{Lofberg2004}, and all DPD-LMI configurations in this study were solved with SDPT3 \cite{Toh1999}.

The following code constructs the GriD-LMIA model for a prescribed grid-node count \texttt{k} and decision degree \texttt{deg}.
\begin{lstlisting}[style=matlab]
k=2; deg =1;
grid = linspace(0, 1, k);
\end{lstlisting}
\begin{lstlisting}[style=matlab]
A = pdmat(grid, @(rho) [-1, 0.5; -1, -2] ...
            + rho * [-1.3, -20; 2, -10], Degree=1);
B = pdmat(grid, @(rho) [1, -4; -1, -1] ...
            + rho * [2.2, 0.5; -6, -5], Degree=1);
C = eye(2);
D = zeros(2);
P = pdvar(2, grid, Degree = deg);
gamma = sdpvar(1);
diffP = rhodiff(P, [-1 1]);
DPDLMI = [diffP + P * A + A' * P, P * B, C';
    B' * P, -gamma * eye(2), D';
    C, D, -gamma * eye(2)] <= 0;
% Zero-margin closure used in the reported benchmark.
% DPDLMI = DPDLMI.usePolya(2);
posDefP = P >= 1e-8 * eye(2);
solver = 'sdpt3';
opts = sdpsettings('solver', solver, 'verbose', 0);
constraints = [DPDLMI.toYalmip, posDefP.toYalmip];
sol = optimize(constraints, gamma, opts);
assert(sol.problem == 0, sol.info);
value(gamma)
\end{lstlisting}

The study varies three modeling choices that affect the finite representation:
\begin{enumerate}
    \item The grid-node count $ k $, which creates $ k-1 $ intervals.
    \item The Bernstein degree $ m $ of the Lyapunov matrix $ P $ on each interval.
    \item The finite certificate: direct Bernstein coefficients, P\'olya elevation, or an SOS-based certificate.
\end{enumerate}

\begin{figure}[htbp]
    \centering
    \begin{tikzpicture}[trim axis left, trim axis right]
        \begin{axis}[
                masubuchi axis,
                xmin=2, xmax=10,
                xtick={2,3,...,10},
                xlabel={grid-node count $ k $},
                legend columns=3,
                legend cell align={left},
                legend style={
                        at={(0.5,1.03)},
                        anchor=south,
                        draw=none,
                        font=\footnotesize,
                        /tikz/every even column/.append style={column sep=5pt}
                    }]
            \addplot[blue!75!black,thick,mark=*,mark size=1.6pt]
            table[x=k,y=m0,col sep=comma]
                {pics/masubuchi_grid_sensitivity.csv};
            \addlegendentry{$ m=0 $}
            \addplot[red!75!black,thick,mark=square*,mark size=1.6pt]
            table[x=k,y=m1,col sep=comma]
                {pics/masubuchi_grid_sensitivity.csv};
            \addlegendentry{$ m=1 $}
            \addplot[teal!75!black,thick,mark=triangle*,mark size=1.8pt]
            table[x=k,y=m2,col sep=comma]
                {pics/masubuchi_grid_sensitivity.csv};
            \addlegendentry{$ m=2 $}
            \addplot[orange!85!black,thick,mark=diamond*,mark size=1.8pt]
            table[x=k,y=m3,col sep=comma]
                {pics/masubuchi_grid_sensitivity.csv};
            \addlegendentry{$ m=3 $}
            \addplot[purple!75!black,thick,dash dot,no marks]
            table[x=k,y=lpvtools,col sep=comma]
                {pics/masubuchi_grid_sensitivity.csv};
            \addlegendentry{LPVTools IQC}
        \end{axis}
    \end{tikzpicture}
    \caption{Degree--grid sensitivity under the direct Bernstein coefficient
        certificate, with the grid-free LPVTools rate-bounded IQC result as a
        baseline reference \cite{Hjartarson2015}.}
    \label{fig:masubuchi-grid}
\end{figure}

To show the effects of grid-node count and decision degree, we first fix the certificate to the direct Bernstein coefficient certificate in \cref{eq: direct-certificate}. \Cref{fig:masubuchi-grid} varies the Bernstein degree of the Lyapunov matrix, $ m\in\{0,1,2,3\} $, and the grid-node count, $ k\in\{2,\ldots,10\} $. The number of shared Bernstein coefficient matrices in $ P(\rho) $ is $ 1+m(k-1) $. Since $ P $ is symmetric and two dimensional, these matrices contain $ 3[1+m(k-1)] $ independent scalar decision entries.

At $ m=0 $, $ P(\rho) $ is constant. Refining the grid therefore adds no decision freedom, and the solver objective remains $ 7.58491 $, matching the reported constant-matrix result \cite{Masubuchi1998Spline}. For $ m=1,2,3 $, $ P(\rho) $ is a continuous piecewise-linear, piecewise-quadratic, or piecewise-cubic polynomial, respectively. Over the sampled grid-node counts, the objective decreases monotonically for each degree, and at every fixed $ k $ a higher degree gives a lower objective. The gap between $ m=1 $ and $ m=3 $ contracts as $ k $ grows. Within this tested range, local decision enrichment therefore has its largest observed effect on the coarser grids.

The horizontal line in \cref{fig:masubuchi-grid} is the grid-free LPVTools \cite{Hjartarson2015} native IQC result, $ \gamma=6.14854 $, which serves as a benchmark line. The direct curves for $ m=2 $ and $ m=3 $ fall below this reference at $ k=3 $, while the $ m=1 $ curve does so at $ k=4 $.
\begin{figure}[htbp]
    \centering
    \begin{tikzpicture}[trim axis left, trim axis right]
        \begin{axis}[
                masubuchi axis,
                width=0.90\columnwidth,
                xmin=0, xmax=10,
                xtick={0,2,...,10},
                xlabel={decision degree $ m $},
                legend columns=2,
                legend style={
                        at={(0.5,1.03)},
                        anchor=south,
                        draw=none,
                        font=\footnotesize,
                    }]
            \addplot[blue!75!black,thick,mark=*,mark size=1.5pt]
            table[x=m,y=direct,col sep=comma]
                {pics/masubuchi_k2_comparison.csv};
            \addlegendentry{Direct}
            \addplot[orange!85!black,thick,dashed,mark=triangle*,
                mark size=1.8pt]
            table[x=m,y=polya,col sep=comma]
                {pics/masubuchi_k2_comparison.csv};
            \addlegendentry{P\'olya}
            \addplot[red!75!black,thick,densely dotted,mark=square,
                mark size=3pt]
            table[x=m,y=rolmip,col sep=comma]
                {pics/masubuchi_k2_comparison.csv};
            \addlegendentry{ROLMIP}
            \addplot[green!55!black,thick,mark=diamond*,mark size=1.7pt]
            table[x=m,y=putinar,col sep=comma]
                {pics/masubuchi_k2_comparison.csv};
            \addlegendentry{Matrix-SOS (M.--L.)}
            \addplot[purple!75!black,thick,dash dot,no marks]
            table[x=m,y=lpvtools,col sep=comma]
                {pics/masubuchi_k2_comparison.csv};
            \addlegendentry{LPVTools IQC}
        \end{axis}
    \end{tikzpicture}
    \caption{Fixed-grid certificate sensitivity and interoperability across
        Lyapunov-matrix degrees.}
    \label{fig:masubuchi-certificate}
\end{figure}

\Cref{fig:masubuchi-certificate} next fixes the grid at $ k=2 $ and compares representative coefficient and SOS certificates as the Lyapunov-matrix degree increases. The plotted GriD-LMIA certificates are the direct Bernstein coefficient test, P\'olya with $ d=1 $, and the common one-dimensional matrix Markov--Luk\'acs certificate with $ r=5 $. Dense Putinar and FullBox assemble the same interval construction, so one curve represents them. The sparse Gram paths are examined separately in the three-parameter study, where their block dimensions differ from the dense paths.

At $ m=0 $, the three certificates coincide because $ P(\rho) $ is constant. For positive degrees, the plotted matrix Markov--Luk\'acs curve is lowest, P\'olya is intermediate, and the direct Bernstein curve is highest. For this example, the trend indicates that low polynomial degree already provides substantial decision flexibility, while the interval matrix-SOS construction is less conservative than the two coefficient-based sufficient tests in the tested orders.

On the fixed single-cell grid, P\'olya elevation with increment one provides only a modest improvement over Direct, whereas the matrix Markov--Luk\'acs curve drops sharply at low degrees. Direct and P\'olya are sufficient coefficient tests for the matrix polynomial. The matrix Markov--Luk\'acs construction introduces additional Gram variables and gives a less conservative sufficient condition in this fixture. Its curve obtains most of the observed reduction with a linear or quadratic global Lyapunov polynomial and then becomes comparatively flat.

ROLMIP uses homogeneous simplex-monomial coefficients \cite{Agulhari2019}, whereas GriD-LMIA uses Bernstein coefficients. On this single-cell fixture the two coefficient sets differ by the positive binomial scaling described in \cref{sec: bernstein-basis}. The direct comparison contains 11 pairs for $ m=0,\ldots,10 $ on $ k=2 $. GriD-LMIA uses the zero-margin residual closure with SDPT3, whereas ROLMIP imposes a $10^{-8}I$ residual offset and uses SeDuMi. Every solve returns problem code zero, and the largest absolute objective discrepancy is $ 2.01\times10^{-7} $, below the prespecified $5\times10^{-4}$ agreement tolerance. This close agreement checks coefficient-translation consistency for the fixture, but the pair is not an identical numerical formulation because the residual offsets and solvers differ. The LPVTools line remains a native-IQC baseline reference.

\subsection{Three-parameter certificate and interoperability study}

The second study uses the fourth-order mass--spring model from \cite{Agulhari2019,Oliveira2008} to examine multivariate GriD-LMIA assembly, directional grid refinement, and certificate interoperability:
\begin{equation}
    \begin{aligned}
        A(\bm\rho)
          & =
        \begin{bmatrix}
            0        & 0       & 1             & 0             \\
            0        & 0       & 0             & 1             \\
            -2\rho_1 & \rho_1  & -\rho_1\rho_3 & 0             \\
            \rho_2   & -\rho_2 & 0             & -\rho_2\rho_3
        \end{bmatrix}, B(\bm\rho)
        =\begin{bmatrix}0\\0\\\rho_1\\0\end{bmatrix}, \\
        C & =\begin{bmatrix}0&1&0&0\end{bmatrix},
        \quad
        D=0 .
    \end{aligned}
    \label{eq:agulhari-plant}
\end{equation}
where the scheduling parameters and their rates satisfy
\begin{align}
    \rho_1     & \in[2/3,2],        &
    \rho_2     & \in[0.8,4/3],      &
    \rho_3     & \in[1,3],\nonumber   \\
    \dot\rho_1 & \in[-1,1],         &
    \dot\rho_2 & \in[-2/5,2/5],     &
    \dot\rho_3 & \in[-1/2,1/2].
\end{align}
The study uses the same zero-margin solver objective $ \gamma $, based on the $ \mathcal L_2 $-gain model in \cref{lemma: L2-gain}. Let $ \bm k=(k_1,k_2,k_3) $ collect the axis-node counts, so the tensor partition has $ k_1k_2k_3 $ nodes and $ (k_1-1)(k_2-1)(k_3-1) $ physical cells. To isolate directional refinement, the third-axis grid is fixed at $ \mathcal G_3=\{1,3\} $, while $ k_1 $ and $ k_2 $ vary independently. Holding the third axis fixed separates the two directional sweeps, and no inference is made from interval width alone.

The following code constructs the GriD-LMIA model for the axis-node-count vector \texttt{k}, decision degree \texttt{m} with matching Gram order $ r=m $, and common sparse Gram bandwidth \texttt{omega}.
\begin{lstlisting}[style=matlab]
k = [2 2 2]; m = 1; r = m; omega = 2;
grid = {linspace(2/3, 2, k(1)), ...
        linspace(0.8, 4/3, k(2)), ...
        linspace(1, 3, k(3))};
Afun = @(t1,t2,t3) [0 0 1 0; 0 0 0 1; ...
    -2*t1 t1 -t1*t3 0; t2 -t2 0 -t2*t3];
Bfun = @(t1,t2,t3) [0; 0; t1; 0];
A = pdmat(grid, Afun, Degree=1);
B = pdmat(grid, Bfun, Degree=1);
C = [0 1 0 0]; D = 0;
\end{lstlisting}
\noindent\emph{Listing continued: decision and certificate assembly.}
\begin{lstlisting}[style=matlab]
P = pdvar(4, grid, "symmetric", Degree=m);
gamma = sdpvar(1);
dP = rhodiff(P, [-1 1; -0.4 0.4; -0.5 0.5]);
\end{lstlisting}
\noindent\emph{Listing continued: residual, certificate, and solve.}
\begin{lstlisting}[style=matlab]
DPDLMI = [dP + P*A + A'*P, P*B, C'; ...
     B'*P, -gamma, D'; C, D, -gamma] <= 0;
% Zero-margin closure used in the reported benchmark.
% DPDLMI = DPDLMI.usePolya(3)
% DPDLMI = DPDLMI.usePutinar(r),
% DPDLMI = DPDLMI.useSpPut(omega, r),
% DPDLMI = DPDLMI.useSpBox(omega, r), or
% DPDLMI = DPDLMI.useFullBox(r).
posDefP = P >= 1e-8 * eye(4);
opts = sdpsettings('solver', 'sdpt3', 'verbose', 0);
constraints = [DPDLMI.toYalmip, posDefP.toYalmip];
sol = optimize(constraints, gamma, opts);
assert(sol.problem == 0, sol.info);
value(gamma)
\end{lstlisting}

All GriD-LMIA Direct and P\'olya results shown in \cref{fig:agulhari-grid-surface,fig: agulhari-degree} use SDPT3 \cite{Toh1999}. For the degree sweep, the P\'olya increment is fixed at $ d=3 $ and applied uniformly along all three parameter directions. The ROLMIP comparison is restricted to Direct rows, which also use SDPT3. The LPVTools line uses its native IQC analysis, while the SOS certificate rows reported separately below retain their MOSEK results \cite{MOSEKApS2019}.

\begin{figure}[htbp]
    \centering
    \begin{tikzpicture}
        \begin{axis}[
                width=0.80\columnwidth,
                height=0.70\columnwidth,
                view={138}{27},
                xlabel={$ \rho_1 $ nodes $ k_1 $},
                ylabel={$ \rho_2 $ nodes $ k_2 $},
                zlabel={solver-returned objective $ \gamma $},
                xtick={2,...,8},
                ytick={2,...,5},
                zmin=1.5,
                ztick={1.5,1.6,1.7},
                grid=major,
                z buffer=sort,
                unbounded coords=jump,
                tick label style={font=\scriptsize},
                label style={font=\footnotesize}]
            \addplot3[
                surf,
                shader=faceted,
                faceted color=black!65,
                fill opacity=0.8,
                draw opacity=1,
                colormap={paleblue}{
                        color(0cm)=(blue!10);
                        color(1cm)=(blue!10)
                    },
                mesh/rows=4,
                mesh/ordering=x varies,
                mark=*,
                mark size=1pt,
                mark options={solid,fill=black,draw=black}]
            table[x=k1,y=k2,z=gamma,col sep=comma]
                {pics/agulhari2019_grid_surface.csv};
        \end{axis}
    \end{tikzpicture}
    \caption{Direct Bernstein coefficient certificate sensitivity with respect to $ \rho_1 $ and $ \rho_2 $ at $ m=1 $ with $ k_3=2 $.}
    \label{fig:agulhari-grid-surface}
\end{figure}

\Cref{fig:agulhari-grid-surface} fixes $ m=1 $ and $ k_3=2 $, and varies $ k_1\in\{2,\ldots,8\} $ and $ k_2\in\{2,\ldots,5\} $. All 28 SDPT3 solves return problem code zero, and the solver objective decreases monotonically along every sampled row and column. At $ k_2=2 $, increasing $ k_1 $ from 2 to 8 lowers the objective from $ 1.69895 $ to $ 1.56923 $, while at $ k_1=2 $, increasing $ k_2 $ from 2 to 5 lowers it to $ 1.66374 $. Refining both directions gives $ 1.54908 $ at $ (k_1,k_2,k_3)=(8,5,2) $. Computed from the unrounded retained records, the observed reduction is $ 0.12972 $ along the sampled $ k_1 $ edge and $ 0.03522 $ along the sampled $ k_2 $ edge. This comparison is specific to these grid paths and does not establish a general relation between interval width and refinement sensitivity.

\begin{figure}[htbp]
    \centering
    \begin{tikzpicture}[trim axis left, trim axis right]
        \begin{axis}[
                width=0.90\columnwidth,
                height=0.70\columnwidth,
                xlabel={decision degree $ m $},
                ylabel={solver-returned objective $ \gamma $},
                xmin=0, xmax=5,
                xtick={0,...,5},
                grid=major,
                tick label style={font=\scriptsize},
                label style={font=\footnotesize},
                legend style={
                        font=\scriptsize,
                        at={(0.5,1.03)},
                        anchor=south,
                        legend columns=2,
                        /tikz/every even column/.append style={column sep=3pt}
                    }]
            \addplot[blue!75!black,thick,mark=*]
            table[x=m,y=pd_direct,col sep=comma]
                {pics/agulhari2019_degree_comparison.csv};
            \addlegendentry{GriD-LMIA Direct}
            \addplot[orange!85!black,thick,dashed,mark=triangle*]
            table[x=m,y=pd_polya,col sep=comma]
                {pics/agulhari2019_degree_comparison.csv};
            \addlegendentry{GriD-LMIA P\'olya}
            \addplot[red!75!black,thick,densely dotted,mark=square]
            table[x=m,y=rolmip_direct,col sep=comma]
                {pics/agulhari2019_degree_comparison.csv};
            \addlegendentry{ROLMIP}
            \addplot[purple!75!black,thick,dash dot]
            table[x=m,y=lpvtools,col sep=comma]
                {pics/agulhari2019_degree_comparison.csv};
            \addlegendentry{LPVTools IQC}
        \end{axis}
    \end{tikzpicture}
    \caption{Solver-returned $ \gamma $ under different decision degrees and certificates on the one-cell grid $ \bm k=(2,2,2) $.}
    \label{fig: agulhari-degree}
\end{figure}

\Cref{fig: agulhari-degree} adopts the coarsest tensor grid, $ \bm k=(2,2,2) $, and increases decision degree $ m $ from zero to five. The constant-$ P $ baseline at $ m=0 $ gives the largest plotted objective, and its duplicate P\'olya point is omitted. Once the Lyapunov matrix is parameter dependent, the Direct and P\'olya objectives decrease rapidly at low degree and then flatten. With the P\'olya elevation increment fixed at $ d=3 $, P\'olya lowers the Direct objective at every $ m\geq1 $. The reduction grows from $ 0.02172 $ at $ m=1 $ to $ 0.03017 $ at $ m=2 $, then decreases to $ 0.01365 $ at $ m=5 $ as both curves flatten.

The six ROLMIP Direct rows use SDPT3 and return problem code zero. ROLMIP imposes a $10^{-8}I$ residual offset, while the plotted GriD-LMIA Direct rows use the zero-margin closure. Their markers overlap, and the largest absolute objective discrepancy is $ 9.04\times10^{-8} $, below the prespecified $5\times10^{-4}$ agreement tolerance. On this one-cell tensor grid, the homogeneous simplex-monomial and Bernstein coefficients again differ only by the positive binomial scaling. The agreement supports the coefficient translation, but these rows are not labeled as an identical numerical formulation because their residual offsets differ. The native LPVTools IQC result, $ \gamma=2.24726 $, lies between the plotted constant-$ P $ and parameter dependent-$ P $ objectives and is retained only as a baseline external reference.

The SOS-based certificates are not included in \cref{fig: agulhari-degree}. Their Gram matrices and coefficient-matching identities grow substantially with $ m $; the assembled sizes and the solver outcomes at $ m=1,2 $ are reported separately in \cref{tab:agulhari-certificates}.

\begin{table}[htbp]
    \centering
    \caption{Zero-margin one-cell outcomes. P\'olya uses $d=3$, and the sparse paths use $\omega=2$. Values require code zero; code 9 is unknown, not infeasible.}
    \label{tab:agulhari-certificates}
    \begin{tabular}{@{}lccc@{}}
        \toprule
        Certificate    & Solver & $ m=1 $ & $ m=2 $          \\
        \midrule
        Direct         & SDPT3  & 1.69895 & 1.57797          \\
        P\'olya        & SDPT3  & 1.67723 & 1.54780          \\
        Putinar        & MOSEK  & 1.66266 & unknown (code 9) \\
        SparsePutinar  & MOSEK  & 1.66266 & 1.52700          \\
        Sparse FullBox & MOSEK  & 1.66266 & 1.52457          \\
        FullBox        & MOSEK  & 1.66266 & unknown (code 9) \\
        \bottomrule
    \end{tabular}
\end{table}

\Cref{tab:agulhari-certificates} compares all six certificate paths on the same one-cell grid. The SOS rows use matching order $ r=m $ and common sparse Gram bandwidth $ \omega=2 $. \Cref{tab:agulhari-sos-scale} reports the corresponding assembled program sizes.

A separate positive-margin validation at $ m=r=2 $ and $ \omega=2 $ used $ \epsilon_P=\epsilon_L=10^{-7} $. SparsePutinar and Sparse FullBox both returned problem code zero. Their respective coefficient-identity errors were $ 8.70\times10^{-7} $ and $ 3.48\times10^{-7} $, below the $ 10^{-6} $ gate. The minimum Gram eigenvalues were $ 1.55\times10^{-4} $ and $ 7.66\times10^{-5} $, while the maximum eigenvalues of $ \mathcal F+\epsilon_LI $ were $ -2.84\times10^{-4} $ and $ -1.44\times10^{-4} $. Sampled minima of $ \lambda(P) $ were $ 0.952 $ and $ 0.964 $, and an independent plant-level reconstruction agreed within $ 8\times10^{-11} $ on a $ 5^3 $ mesh and all eight rate vertices. These checks passed the stated numerical validation gates for the strict-margin models, rather than providing a full-domain floating-point error bound. The objectives in \cref{tab:agulhari-certificates} remain the separate zero-margin comparison.

\begin{itemize}
    \item At $ m=1 $, SparsePutinar and Putinar agree exactly in the stored output because $ \omega=2 $ spans every active Gram-basis direction. All four SOS objectives round to $ 1.66266 $ at the precision shown in the table.
    \item At $ m=2 $, both sparse rows return finite objectives with problem code zero. SparsePutinar gives $ 1.52700 $, while Sparse FullBox gives $ 1.52457 $, a reduction of $ 0.00243 $ for this fixture and configuration. Putinar and FullBox both return YALMIP \texttt{problem-9}, reported as an unknown solver problem after high-memory MOSEK calls, so neither dense row has a retained objective. These resource-dependent outcomes do not establish certificate infeasibility or a solver-independent comparison.
    \item Both sparse rows retain a maximum Gram-block dimension of $ 48 $. SparsePutinar uses 160 blocks and $ 188{,}160 $ Gram scalar variables, whereas Sparse FullBox uses 216 blocks and $ 254{,}016 $ Gram scalar variables. The two dense rows have maximum dimension $ 162 $ and terminate without retained objectives.
\end{itemize}

\Cref{tab:agulhari-sos-scale} connects the certificate outcomes to the assembled finite programs. Here, Blocks is the number of positive-semidefinite Gram blocks contributed by the certificate, while Scalars counts their distinct YALMIP scalar variables and excludes the original $ P $ and $ \gamma $. Max dim.\ is the dimension of the largest Gram block, and Identities is the number of scalar coefficient-matching equalities. SparsePutinar retains the Putinar generator terms with narrower Gram bands, while Sparse FullBox retains the FullBox traversal and decomposes its Gram representation into overlapping tensor-window blocks.

\begin{table}[htbp]
    \centering
    \caption{SOS assembly dimensions, with $ \omega=2 $ for the sparse paths.}
    \label{tab:agulhari-sos-scale}
    \begingroup
    \footnotesize
    \setlength{\tabcolsep}{1.9pt}
    \begin{tabular}{@{}clrrrr@{}}
        \toprule
        $ m $ & Certificate    & Blocks & Scalars     & Max dim. & Identities \\
        \midrule
        1     & Putinar        & 32     & $16{,}608$  & 48       & $4{,}536$  \\
        1     & SparsePutinar  & 32     & $16{,}608$  & 48       & $4{,}536$  \\
        1     & Sparse FullBox & 64     & $18{,}648$  & 48       & $4{,}536$  \\
        1     & FullBox        & 64     & $18{,}648$  & 48       & $4{,}536$  \\
        2     & Putinar        & 32     & $246{,}888$ & 162      & $21{,}000$ \\
        2     & SparsePutinar  & 160    & $188{,}160$ & 48       & $21{,}000$ \\
        2     & Sparse FullBox & 216    & $254{,}016$ & 48       & $21{,}000$ \\
        2     & FullBox        & 64     & $319{,}368$ & 162      & $21{,}000$ \\
        \bottomrule
    \end{tabular}
    \endgroup
\end{table}

The separate condition $ P(\bm\rho)\succ0 $ adds $ (m+1)^3 $ positive-semidefinite blocks on this one-cell grid: eight blocks at $ m=1 $ and 27 at $ m=2 $.  Each is only $ 4\times4 $, adding respectively $ 80 $ and $ 270 $ scalarized symmetric entries. This common addition is small relative to the Gram systems in the table and therefore does not explain the different solver outcomes.

From $ m=1 $ to $ m=2 $, the maximum dense Gram dimension for Putinar and FullBox increases from $ 48 $ to $ 162 $, while the scalar-variable counts in \cref{tab:agulhari-sos-scale} increase by roughly an order of magnitude. SparsePutinar and Sparse FullBox retain a maximum block dimension of $ 48 $ by introducing overlapping positive-semidefinite blocks. At $ m=2 $, SparsePutinar uses fewer Gram scalars than Sparse FullBox but returns a slightly higher objective. These dimensions and outcomes describe different assembled cones and do not establish a backend-independent runtime or numerical-stability ranking.

For this multi-parameter case, grid refinement and a higher decision degree lower the Direct objectives. Both sparse Gram decompositions keep the largest block at dimension $ 48 $ and return finite solver objectives at $ m=2 $, with Sparse FullBox lower by $ 0.00243 $ in this run. Thus, the certificate choice changes both the representable cone and the assembled Gram structure, while the observed solver outcomes remain specific to the reported environment and backend settings.

\section{Conclusion}
\label{sec:conclusion}

GriD-LMIA constructs inspectable finite semidefinite models for affine DPD-LMIs over hyper-rectangular parameter and rate domains. It combines shared Bernstein decision coefficients, rate-vertex differentiation, adjacent-cell interface conditions, and interchangeable sufficient-certificate assemblers. Derivative-free robust PD-LMIs are included as a special case. The one-parameter study examines degree, grid, and representative certificate choices and checks coefficient-translation consistency against a closely aligned ROLMIP formulation. The three-parameter study examines directional grid and degree choices together with the assembled scale and solver outcomes of all four SOS interfaces. Both studies keep baseline native LPVTools analyses separate from package ranking.

GriD-LMIA is appropriate when users need direct control over axis grids, Bernstein decision degree $ m $, P\'olya increment $ d $, common sparse Gram bandwidth $ \omega $, or absolute Gram order $ r $. A failed fixed-order certificate remains inconclusive about the continuous problem, and products of two decision-dependent factors remain outside the affine modeling contract.

\appendices
\crefalias{section}{appendix}
\section{Bernstein Coefficient Algebra}
\label{app:bernstein-algebra}

This appendix records the exact coefficient operations used by the modeling layer. For a scalar degree $ m $, the coefficient labels traverse $ \{0,\ldots,m\}^{\ell} $.  On each physical cell,
\begin{equation}
    P^{(\mathbf c)}(\bm\alpha)
    =\sum_{\mathbf i\in\{0,\ldots,m\}^{\ell}}
    B_{\mathbf i}^{m}(\bm\alpha)P^{(\mathbf c)}[\mathbf i].
    \label{eq:tensor-bernstein-expansion}
\end{equation}
Each operation below is stated for the general $ \ell $-parameter representation before its scalar specialization.  The scalar formulas and MATLAB snippets are one-parameter demos chosen only for compact display.

\subsection{Lossless degree elevation}

Consider a degree-$ m $ polynomial on physical cell $ \mathcal H_{\mathbf c} $. Uniform elevation to $ M\geq m $ in all $ \ell $ directions is:
\begin{equation}
    \widehat C^{(\mathbf c)}[\mathbf k]
    =\sum_{\mathbf i\in\{0,\ldots,m\}^{\ell}}
    \left[
        \prod_{s=1}^{\ell}
        \frac{\binom{m}{i_s}\binom{M-m}{k_s-i_s}}
        {\binom{M}{k_s}}
        \right]C^{(\mathbf c)}[\mathbf i].
    \label{eq:direct-elevation}
\end{equation}
for every $ \mathbf k\in\{0,\ldots,M\}^{\ell} $, where the negative terms in the binomial coefficients are understood to be zero. Equivalently, this can be written as a $ \ell $-dimensional discrete convolution with kernel:
\begin{equation*}
    \mathcal E_{M-m}[\mathbf j]
    :=\prod_{s=1}^{\ell}\binom{M-m}{j_s}, \quad \mathbf j\in\{0,\ldots,M-m\}^{\ell}.
\end{equation*}
Here $ \ast_\ell $ denotes the $ \ell $-dimensional discrete convolution. \cref{eq:direct-elevation} can be expressed as
\begin{multline}
    \left\{
    \prod_{s=1}^{\ell}\binom{M}{k_s}
    \widehat C^{(\mathbf c)}[\mathbf k]
    \right\}_{\mathbf k}
    =\\
    \left\{
    \prod_{s=1}^{\ell}\binom{m}{i_s}
    C^{(\mathbf c)}[\mathbf i]
    \right\}_{\mathbf i}
    \ast_{\ell}\left\{
    \mathcal E_{M-m}[\mathbf j]
    \right\}_{\mathbf j}.
    \label{eq: elevation-convolution}
\end{multline}

The elevation is deterministic as the convolution \cref{eq: elevation-convolution} uniquely calculates the degree-$ M $ representation from the degree-$ m $ \cref{eq:tensor-bernstein-expansion} and the degree-$ (M-m) $ convolution kernel.

\begin{lstlisting}[style=matlab]
p = pdmat({[0 1]}, {1, 2, 6}, Degree=2);
pElevated = p.elevate(1);
\end{lstlisting}

\subsection{Addition and subtraction}

Let $ P $ and $ Q $ have degrees $ m $ and $ n $.  Elevate their coefficient arrays to $ M=\max(m,n) $ using \eqref{eq:direct-elevation}, and denote the results by $ \widehat P $ and $ \widehat Q $.  For every physical cell $ \mathbf c $ and $ \mathbf i\in\{0,\ldots,M\}^{\ell} $,
\begin{equation}
    (P\mathbin{\pm}Q)^{(\mathbf c)}[\mathbf i]
    =\widehat P^{(\mathbf c)}[\mathbf i]
    \mathbin{\pm}\widehat Q^{(\mathbf c)}[\mathbf i].
\end{equation}
GriD-LMIA performs this tensor-degree alignment automatically.  A one-parameter demo is
\begin{lstlisting}[style=matlab]
p = pdmat({[0 1]}, {1, 2, 6}, Degree=2);
q = pdmat({[0 1]}, {2, 4}, Degree=1);
s = p + q;
degree = s.Degree;                    % 2
coefficients = cell2mat(s.coeffs(1)); % [3 5 10]
T = s.bernTable("oneLine");
\end{lstlisting}
The same common-degree step precedes subtraction.  When the operands use different compatible grids, the algebra represents them on their common refinement.

\subsection{Matrix multiplication}

Let $ P^{(\mathbf c)} $ and $ Q^{(\mathbf c)} $ have degrees $ m $ and $ n $:
\begin{equation}
    \begin{aligned}
        P^{(\mathbf c)}
         & =
        \sum_{\mathbf i\in\{0,\ldots,m\}^{\ell}}
        B_{\mathbf i}^{m}P^{(\mathbf c)}[\mathbf i], \\
        Q^{(\mathbf c)}
         & =
        \sum_{\mathbf j\in\{0,\ldots,n\}^{\ell}}
        B_{\mathbf j}^{n}Q^{(\mathbf c)}[\mathbf j].
    \end{aligned}
\end{equation}
Applying the Bernstein product identity independently in every parameter direction gives, for $ \mathbf k\in\{0,\ldots,m+n\}^{\ell} $,
\begin{equation}
    \begin{aligned}
        (PQ)^{(\mathbf c)}[\mathbf k]
        ={} & \sum_{\substack{
                      \mathbf i\in\{0,\ldots,m\}^{\ell}\\
                      \mathbf j\in\{0,\ldots,n\}^{\ell},\
                      \mathbf i+\mathbf j=\mathbf k}}
        P^{(\mathbf c)}[\mathbf i]Q^{(\mathbf c)}[\mathbf j]
        \prod_{s=1}^{\ell}
        \frac{\binom{m}{i_s}\binom{n}{j_s}}
        {\binom{m+n}{k_s}}.
    \end{aligned}
    \label{eq:bernstein-product}
\end{equation}
where the negative terms in the binomial coefficients are understood to be zero. \cref{eq:bernstein-product} also admits the following division-free $ \ell $-dimensional discrete convolution form:
\begin{gather*}
    \left\{
    \prod_{s=1}^{\ell}\binom{m+n}{k_s}
    (PQ)^{(\mathbf c)}[\mathbf k]
    \right\}_{\mathbf k}
    =\\
    \left\{
    \prod_{s=1}^{\ell}\binom{m}{i_s}
    P^{(\mathbf c)}[\mathbf i]
    \right\}_{\mathbf i}
    \ast_{\ell}
    \left\{
    \prod_{s=1}^{\ell}\binom{n}{j_s}
    Q^{(\mathbf c)}[\mathbf j]
    \right\}_{\mathbf j}.
\end{gather*}
Because $ P $ and $ Q $ are matrices, the ordered matrix multiplication is preserved. If at most one factor contains decision variables, the product remains affine in those decisions. Products of two decision-dependent factors remain outside the affine modeling contract.

For the one-parameter scalar rows $ P=[1,2,6] $ and $ Q=[2,4] $, the multiplication gives $ PQ=[2,4,28/3,24] $:
\begin{lstlisting}[style=matlab]
P = pdmat({[0 1]}, {1, 2, 6}, Degree=2);
Q = pdmat({[0 1]}, {2, 4}, Degree=1);
R = P * Q;
productDegree = R.Degree;             % 3
productCoefficients = cell2mat(R.coeffs(1));
% productCoefficients = [2 4 28/3 24]
\end{lstlisting}

\subsection{Differentiation and rate vertices}

Define the indicator vector $ \mathbf e_s\in\mathbb{R}^{\ell} $ by the $ s $th entry to be one and zero elsewhere, and $ h_s^{\mathbf c} $ as defined in \cref{eq: local-coordinate}. For $ m\geq1 $, the general tensor derivative is
\begin{equation}
    \begin{aligned}
        \frac{\partial P^{(\mathbf c)}}{\partial\rho_s}
        ={} & \frac{m}{h_s^{\mathbf c}}
        \sum_{\substack{\mathbf i\in\{0,\ldots,m\}^\ell\\i_s\leq m-1}}
        \left(P^{(\mathbf c)}[\mathbf i+\mathbf e_s]
        -P^{(\mathbf c)}[\mathbf i]\right)     \\
            & {}\times B_{i_s}^{m-1}(\alpha_s)
        \prod_{\substack{t=1\\t\neq s}}^\ell B_{i_t}^{m}(\alpha_t).
    \end{aligned}
    \label{eq:bernstein-partial}
\end{equation}
where the summation index satisfies $ 0\leq i_s\leq m-1 $ while $ 0\leq i_t\leq m $ for $ t\neq s $. For $ m=0 $, every partial derivative is zero.

This partial derivative has degree $ m-1 $ in direction $ s $ and degree $ m $ in every other direction.  For the one-parameter demo, with $ h_c=\rho_1^{(c+1)}-\rho_1^{(c)} $, \eqref{eq:bernstein-partial} becomes
\begin{equation}
    \frac{\partial P^{(c)}}{\partial\rho_1}
    =\frac{m}{h_c}\sum_{i=0}^{m-1}
    \left(P^{(c)}[i+1]-P^{(c)}[i]\right)B_i^{m-1}(\alpha).
    \label{eq:bernstein-derivative-1d}
\end{equation}
Thus $ [1,2,6] $ on $ [0,2] $ differentiates to $ [1,4] $.  For $ \ell>1 $, the multivariate partials initially have different tensor degrees. \texttt{rhodiff} elevates each partial only in its differentiated direction back to the common degree $ m $, forms
\begin{equation}
    \dot P(\bm\rho)=\sum_{s=1}^{\ell}
    \frac{\partial P}{\partial\rho_s}(\bm\rho)\dot\rho_s,
\end{equation}
and stores one coefficient row for each of the $ 2^\ell $ elements of $ \mathcal V_{\mathcal R} $.  When $ \ell=1 $, no cross-direction alignment is needed, so the result retains degree $ m-1 $.  The corresponding one-parameter demo is
\begin{lstlisting}[style=matlab]
yalmip('clear')
P = pdvar(1, {[0 2]}, Degree=2, ...
    RateBounds=[-1 1]);
dP = rhodiff(P);
T = dP.bernTable("oneLine");
% One degree-one row per rate vertex.
\end{lstlisting}

\bibliographystyle{IEEEtran}
\bibliography{references}

\begin{IEEEbiography}[{\includegraphics[width=1in,height=1.25in,clip,keepaspectratio]{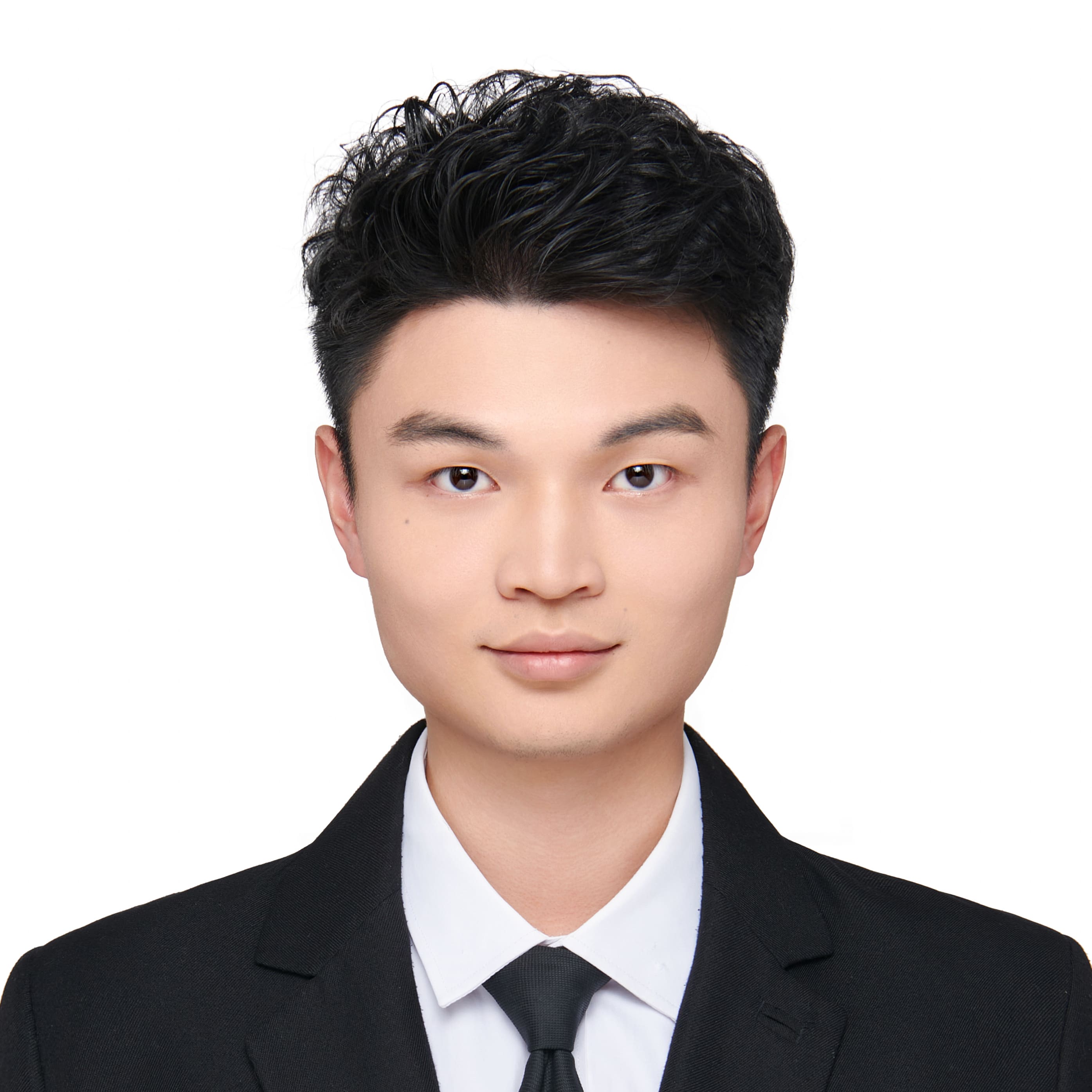}}]{Yicheng Xu}
received the B.S. degree in automation from Southeast University, Nanjing, China, in 2020 and the M.S. degree in mechanical engineering from the University of California, Irvine, in 2021. He has been a Ph.D. student since 2021. His research interests include distributed control of multiagent systems, event-triggered control, and anti-windup control.
\end{IEEEbiography}

\begin{IEEEbiography}[{\includegraphics[width=1in,height=1.25in,clip,keepaspectratio]{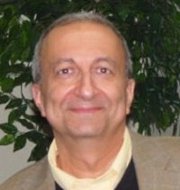}}]{Faryar Jabbari}
is on the faculty of the Mechanical and Aerospace Engineering Department at UCI. His research is in control theory and its applications. He has served as an associate editor for Automatica and IEEE Transactions on Automatic Control, the program chair for ACC-11 and CDC-09, and the general chair for CDC-14.
\end{IEEEbiography}

\end{document}